\newtheorem{theorem}{\hspace{6mm}Theorem}[section]
\newtheorem{rem}[theorem]{\hspace{6mm}Remark}
\newtheorem{exam}[theorem]{\hspace{6mm}Example}
\newtheorem{cor}[theorem]{\hspace{6mm}Corollary}
\def\be{\begin{equation}}
\def\ee{\end{equation}}
\numberwithin{equation}{section}
\begin{document}

\title{Dual control Monte Carlo method for tight bounds of value function under Heston stochastic volatility model} 

\vskip 4mm

\author{Jingtang Ma\thanks{School of Economic Mathematics, Southwestern University of
Finance and Economics, Chengdu, 611130, China (Email: mjt@swufe.edu.cn). The work was supported by National Natural Science Foundation of China (Grant No.
11671323) and Program for New Century Excellent Talents in University (China Grant No. NCET-12-0922).}, Wenyuan Li\thanks{School of Economic Mathematics, Southwestern University of Finance and Economics, Chengdu, 611130, China (Email: Wylfm@2011.swufe.edu.cn).} and Harry Zheng
\thanks{Corresponding author. Department of Mathematics, Imperial College, London SW7 2BZ, UK
(Email: h.zheng@imperial.ac.uk).}}

\date{}

\maketitle

\begin{abstract}
The aim of this paper is to study the fast computation of the lower and upper bounds on the value function for  utility maximization under the Heston stochastic volatility model with general utility functions. It is well known there is a closed form solution of the HJB equation for power utility due to its homothetic property. It is not possible to get closed form solution for general utilities  and there is little literature on the numerical scheme to solve the HJB equation for the Heston model. In this paper we propose an efficient dual control Monte Carlo method for computing tight lower and upper bounds of the value function. We identify  a particular form of the dual control which leads to the closed form upper bound for a class of utility functions, including power, non-HARA and Yarri utilities. Finally, we perform some numerical tests to see the efficiency, accuracy, and robustness
of the method. The numerical results support strongly our proposed scheme.

\end{abstract}

\vspace{1.0cm}

\noindent {\bf MSC 2010:} 49L20, 90C46

\smallskip

\noindent {\bf Keywords:} {Utility maximization, Heston stochastic volatility model, dual control Monte Carlo method, lower and upper bounds, non-HARA and Yarri utilities}

\section{Introduction}

Dynamic portfolio optimization is one of most studied research areas in mathematical finance. Stochastic control and convex duality are two standard methods to solve utility maximization problems. For a complete market such as the Black-Scholes model, the problem has already been solved. One may first find the optimal terminal wealth and then use the martingale representation theorem to find the optimal control, or one may solve the HJB equation to find the optimal value function  and optimal control, see many excellent books for expositions, e.g., \cite{Shreve1998}, \cite{Pham2009}.

For an incomplete market driven by some Markovian processes, one may use the dynamic programming equation to solve the problem. One well-known example is the  Heston stochastic volatility model. The HJB equation has two state variables (wealth and variance). For a power utility, one may decompose the solution to reduce the dimensionality of state variables by one and get a simplified nonlinear PDE with one state variable (variance). Thanks to the affine structure of the Heston model,  \cite{Zariphopoulou2001} uncovers a clever transformation that simplifies the nonlinear  equation further into an equivalent linear PDE and  derives a closed-form  solution, see \cite{Zariphopoulou2001} and \cite{Kraft2005} for details.  \cite{Kallsen2010} extend the Heston model to general affine stochastic volatility models and \cite{Richter2014} to multi-dimensional affine jump-diffusion stochastic volatility models with the martingale method and matrix Riccati ordinary differential equations.

The success of finding a closed-form solution for utility maximization with the Heston model (or general affine stochastic processes) crucially depends on the underlying utility being a power utility if the wealth process is exponential (or exponential utility if the wealth process is additive). Such combination of utility and wealth process would decouple wealth and variance variables in the optimal value function and the special affine structure of variance would help to give a closed-form solution, whether using the HJB equation or the quadratic backward stochastic differential equation. For general utilities, there is no way one can decouple wealth and variance variables and, consequently, there are no results for the existence of a classical solution to the HJB equation, let alone a closed form solution. Furthermore, due to high nonlinearity of the HJB equation with two state variables, it is also very difficult to find an efficient numerical method to solve it.

For a Black-Scholes market with closed convex cone constraints for controls and general continuous concave utility functions,
\cite{Bian2015} (see also \cite{Bian2011}) show that there exists a classical solution to the HJB equation and the solution has a representation in terms of a solution to the dual HJB equation. That approach does not work  for an incomplete market model such as the Heston model. The reason is that the stochastic volatility is not a traded asset (unless an additional volatility related security is introduced) and the Heston model is not a geometric Brownian motion and the dual HJB equation is an equally difficult nonlinear PDE with two state variables.

Although the dual control method cannot solve general utility maximization problems with incomplete market models, it nevertheless provides the valuable information for the optimal value function. The dual value function supplies a natural upper bound for the original primal value function due to the  dual relation and a feasible control which may be used to provide a good lower bound for the primal value function. If one can make the gap between the lower and upper bounds small, then one can at least find an approximate solution to the primal value function, which would be impossible without using the dual control method. This idea has been applied successfully to find the approximate optimal value function for regime switching asset price models with general utility functions, see  \cite{Ma2017}.

In this paper we adopt this line of attack to utility maximization with the Heston stochastic volatility model.
We derive the dual control problem and recover the optimal solution  for power utility in  \cite{Zariphopoulou2001} and \cite{Kraft2005}. For general utilities, we propose a Monte Carlo method to compute the lower and upper bounds for the primal value function. Some upper bounds can be computed efficiently with the closed form formula or the fast Fourier-cosine method thanks to the affine structure of the Heston model. Numerical tests for power, non-HARA  and Yarri utilities show that  these bounds are tight, which  provides a good approximation to the primal value function. To the best of our knowledge, this is the first time an efficient dual control  Monte Carlo method is proposed to find the tight lower and upper bounds for the value function with the Heston stochastic volatility model and general utility functions.

The rest of the paper is arranged as follows. Section \ref{sec:dual-control} discusses  the dual control method and derives the same closed-form solution for power utility as that in \cite{Kraft2005}.  Section \ref{section:lower-upper-bound} presents the dual control  Monte Carlo method for computing tight  lower and upper bounds of the value function. Section \ref{sec:examples} provides the closed-form upper bound for a specific  form of the dual control and a class of utility functions, including power, non-HARA and Yarri utilities. Section \ref{sec:numerical-test} performs numerical tests to see  the efficiency, accuracy, and robustness of the method. Section \ref{conclusions} concludes. Appendix A gives the closed form solution of the Riccati equation associated with the Heston model and Appendix B explains the COS method for computing the upper bound with Yarri utility.

\section{ The Heston model and the dual control method}\label{sec:dual-control}

Assume that $(\Omega, {\cal F}, {\cal F}_t, P)$  is a given probability space with filtration ${\cal F}_t$ generated by standard Brownian motions $W^s$ and $W^v$ with correlation coefficient $\rho$ and
completed with all $P$-null sets. The market is composed of two traded assets, one savings account $B$ with riskless interest rate $r$ and one risky asset $S$ satisfying a stochastic differential equation (SDE) (see \cite{Heston1993}):
\begin{eqnarray*}
dS_t = S_t[ (r+A v_t) dt + \sqrt{v_t} dW_t^s],
\end{eqnarray*}
 where $A$ is a constant representing the market price of risk, $v$ is an asset variance process satisfying
a mean-reverting square-root process:
\[
 dv_t =\kappa(\theta-v_t)dt + \xi \sqrt{v_t} dW_t^v,
\]
 $\theta$ is the long-run average volatility, $\kappa$ the rate that $v_t$ reverts to $\theta$,  $\xi$ the variance of $\sqrt{v_t}$, and all parameters are positive constants and satisfy the Feller condition $2\kappa\theta \geq \xi^2$ to ensure that $v_t$ is strictly positive.

 Let $X$ be the wealth process. At time $t\in [0,T]$ the investor allocates a proportion $\pi_t$ of  wealth $X_t$ in risky asset $S$ and the remaining wealth in savings account $B$. Then the wealth process $X$  satisfies the SDE:
\begin{equation}\label{X-process}
dX_t =  X_t[(r + \pi_t  A v_t) dt + \pi_t \sqrt{v_t}dW_t^s],
\end{equation}
where $\pi$ is a progressively measurable control process.

The utility maximization problem is defined by
\begin{equation} \label{primal}
\sup_{\pi} E[U(X_T)]\mbox{ subject to (\ref{X-process})},
\end{equation}
where $U$ is a utility function that is continuous, increasing and concave (but not necessarily strictly increasing and strictly concave) on $[0,\infty)$, and $U(0)=0$.
To solve \eqref{primal} with the stochastic control method, we define the value function
 \begin{equation}\label{optimization-W}
 \mathcal{W}(t,x,v):=\sup_{\pi\in\Pi_{t}}E_{t,x,v}[U(X_T)],
\end{equation}
where  $E_{t,x,v}$ is the conditional expectation operator given $X_t=x$ and $v_t=v$, and
  $\Pi_{t}:=\{\pi_s,\, s\in[t,T]\}$ is the set of all admissible control strategies over $[t,T]$.

 By the dynamic programming principle,  $ \mathcal{W}$ satisfies the following HJB equation:
\begin{equation}\label{primalHJB}
{\partial \mathcal{W}\over \partial t}
+ \sup_\pi\left\{(rx+\pi x Av)\mathcal{W}_x + \kappa(\theta-v)\mathcal{W}_v
+ {1\over 2} \pi^2 x^2 v \mathcal{W}_{xx} + {1\over 2} \xi^2 v \mathcal{W}_{vv}
+ \rho \pi x \xi v \mathcal{W}_{xv} \right\}=0
\end{equation}
with the terminal condition $\mathcal{W}(T,x,v)=U(x)$, where $\mathcal{W}_x$
is the partial derivative of  $\mathcal{W}$ with respect to $x$ and evaluated at $(t,x,v)$,  the other derivatives are similarly defined.  The maximum in (\ref{primalHJB}) is achieved at
\begin{equation}\label{pi}
\pi = -\frac{A\mathcal{W}_x}{x\mathcal{W}_{xx}}-\frac{\xi\rho\mathcal{W}_{xv}}{x\mathcal{W}_{xx}}.
\end{equation}
Inserting (\ref{pi}) into (\ref{primalHJB}) gives a nonlinear PDE
\begin{equation}\label{X-HJB}
{\partial \mathcal{W}\over \partial t}+rx\mathcal{W}_x + \kappa(\theta-v)\mathcal{W}_v + \frac{1}{2}\xi^2 v\mathcal{W}_{vv}-\frac{1}{2\mathcal{W}_{xx}}[A\sqrt{v}\mathcal{W}_x
+\xi\rho\sqrt{v}\mathcal{W}_{xv}]^2 = 0.
\end{equation}
For a power utility $U(x)=(1/p)x^p$, $0<p<1$,
the solution of (\ref{X-HJB}) can be decomposed as
$$\mathcal{W}(t,x,v)=U(x) f(t,v)$$
 for some function $f$ which satisfies
\begin{equation}
\label{f-HJB}
{\partial  f\over \partial t}+pr  f + \kappa(\theta-v) f_v + \frac{1}{2}\xi^2 v f_{vv}-\frac{pv}{2(p-1) f}[A f
+\xi\rho f_{v}]^2 = 0
\end{equation}
with the termianl condition $f(T,v)=1$. The optimal control is given by
\begin{equation*}\label{control1.8}
 \pi = -{A\over (p-1)} - {\xi\rho  f_v \over (p-1) f}.
 \end{equation*}
The equation (\ref{f-HJB}) is simpler than the equation (\ref{X-HJB}) but is still a nonlinear PDE.  \cite{Zariphopoulou2001} suggests a  clever transformation
$$  f(t,v)=\hat f(t,v)^{\varsigma}$$
with $\varsigma=(1-p)/(1-p+\rho^2 p)$, which removes the nonlinear terms in (\ref{f-HJB}),  and $\hat  f$ satisfies a linear PDE
\begin{equation}\label{tildef-HJB}
{\partial \hat  f\over \partial t}+{pr\over \varsigma}\hat  f + \kappa(\theta-v)\hat  f_v + \frac{1}{2}\xi^2 v \hat  f_{vv}-\frac{p}{2(p-1)\varsigma}(A^2v\hat  f + 2A\xi\rho v\varsigma \hat  f_v) = 0
\end{equation}
with the terminal condition $\hat  f(T,v)=1$. The equation (\ref{tildef-HJB}) can be easily solved and the solution $\hat  f$ has a Feynman-Kac representation. In fact, thanks to the affine structure of the Heston model, the solution $f$ of the equation (\ref{f-HJB}) has an analytical form as
$$ f(t,v)=\exp(C(t)+D(t)v),$$
where $C$ and $D$ are solutions of
some Riccati-type ODEs with  terminal conditions $C(T)=0$ and $D(T)=0$ and can be easily solved, and the optimal control is given by
$\pi=(A + \xi\rho D(t))/(1-p)$,
 see \cite{Kraft2005} for details.

The success of simplifying the HJB equation (\ref{X-HJB}) to a solvable nonlinear PDE (\ref{f-HJB}) crucially depends on the assumption that the utility function is a power utility.
For general utility functions  (e.g., non-HARA and Yarri utilities), it is virtually impossible one can find the analytical solutions.

The dual function of $U$ is defined by
\begin{equation}\label{dual-func}
\widetilde{U}(y)=\sup_{x\geq 0}[U(x)-xy],
\end{equation}
for $y\geq 0$. The function $\widetilde{U}(y)$ is a continuous, decreasing and convex function on $[0,\infty)$ and satisfies $\widetilde{U}(\infty)=0$.
Suppose a dual process of the following form
\begin{equation*}
d Y_t = Y_t [\alpha_t dt + \beta_t dW_t^s + \gamma_t dW_t^v]
\end{equation*}
with the initial condition $Y_0=y$. If $XY$ is a super-martingale for any control process $\pi$, then
$$ E[ U(X_T)] \leq E[\widetilde{U}(Y_T)] + xy,$$
where $X_0=x$ is the initial wealth, which leads to
$$ \sup_{\pi} E[ U(X_T)] \leq \inf_{y}( \inf_{\alpha,\beta,\gamma}E[\widetilde{U}(Y_T)] + xy),$$
 and we have a weak duality relation. To make $XY$ a super-martingale, we can use It\^{o}'s formula to get
\begin{equation*}
\alpha_t \leq -r,~ \beta_t = -A\sqrt{v_t} - \rho \gamma_t.
\end{equation*}
Furthermore, since $\widetilde{U}$ is a decreasing convex function, we must have
$\alpha_t=-r$.
Therefore, the dual process is given by
\begin{equation}\label{Y-process}
dY_t = Y_t[-r dt - (\rho \gamma_t + A\sqrt{v_t})dW_t^s + \gamma_t dW_t^v]
\end{equation}
with the initial condition $Y_0=y$, where $\gamma$ is a dual control process and $y$ is also a dual control variable.
The solution to \eqref{Y-process} at time $T$, with initial condition $Y_{t}=y$, can be written as
\begin{align*}\label{Y-expression}
Y_T = y\exp(M_{t,T}),
\end{align*}
where
\begin{equation*}\label{M-t-T}
M_{t,T} = -\int_t^T\left(r + \frac{1}{2}(1-\rho^2)\gamma_u^2 + \frac{1}{2}A^2 v_u \right) du - \int_t^T (\rho \gamma_u + A \sqrt{v_u}) dW^s_u + \int_t^T \gamma_u dW^v_u.
\end{equation*}
Define the dual value function as
\begin{equation*}\label{dual-val-func}
\widetilde{\mathcal{W}}(t,y,v):=\inf_{\gamma}E_{t,y,v}[\widetilde{U}(Y_T)].
\end{equation*}
By the dynamic programming principle, $\widetilde{\mathcal{W}}$ satisfies the following dual HJB equation
\begin{eqnarray}\label{Y-HJB-1}
 && {\partial \widetilde{\mathcal{W}}\over \partial t}
+ \inf_{\gamma}\bigg\{ - ry\widetilde{\mathcal{W}}_y + \kappa(\theta-v)\widetilde{\mathcal{W}}_v +\frac{1}{2}y^2[A^2v +\gamma^2(1-\rho^2)]\widetilde{\mathcal{W}}_{yy} \nonumber\\
 &&\qquad +\, y[\gamma\xi\sqrt{v}(1-\rho^2)-Av\xi\rho]\widetilde{\mathcal{W}}_{yv} +\frac{1}{2}\xi^2 v\widetilde{\mathcal{W}}_{vv}\bigg\}=0
\end{eqnarray}
with the terminal condition $\widetilde{\mathcal{W}}(T,y,v)=\widetilde{U}(y)$.
The minimum in  (\ref{Y-HJB-1}) is achieved at
\begin{equation}\label{find-gammastar-2}
\gamma = -\frac{\xi\sqrt{v}\widetilde{\mathcal{W}}_{yv}}{y\widetilde{\mathcal{W}}_{yy}}.
\end{equation}
Inserting \eqref{find-gammastar-2} into \eqref{Y-HJB-1} gives
\begin{equation}\label{Y-HJB-2}
 {\partial \widetilde{\mathcal{W}}\over \partial t} - ry\widetilde{\mathcal{W}}_y + \kappa(\theta-v)\widetilde{\mathcal{W}}_v +
\frac{1}{2}\widetilde{\mathcal{W}}_{yy}y^2A^2v - \frac{1}{2}\frac{\xi^2 v \widetilde{\mathcal{W}}_{yv}^2}{\widetilde{\mathcal{W}}_{yy}}(1-\rho^2) 
- Av\xi\rho y\widetilde{\mathcal{W}}_{yv} + \frac{1}{2}\widetilde{\mathcal{W}}_{vv} \xi^2 v
=0.
\end{equation}

\begin{theorem}\label{complete-dual-transform}
Let $\widetilde{\mathcal{W}}\in C^{1,2,2}$ be the solution of the dual HJB equation \eqref{Y-HJB-1}
and  let $\widetilde{\mathcal{W}}$ be strictly convex in $y$ and satisfy $\widetilde{\mathcal{W}}_y(t,0,v)=-\infty$ and $\widetilde{\mathcal{W}}_y(t,\infty,v)=0$. Then the primal value function is given by
\begin{equation*}\label{complete-inverse}
\mathcal{W}(t,x,v) = 
\widetilde{\mathcal{W}}(t,y^*,v) + xy^*,
\end{equation*}
where $y^*=y(t,x,v)$ is the solution of the equation
$$ \widetilde{\mathcal{W}}_y(t,y,v)  +x =0.$$
Furthermore, $\mathcal{W}\in C^{1,2,2}$
is the solution of the HJB equation \eqref{X-HJB} with the boundary condition $\mathcal{W}(T,x,v)=U(x)$ and the optimal feedback control is given by
$$ \pi(t,x,v) = {A\over x}y^*  \widetilde{\mathcal{W}}_{yy}(t,y^*,v) + {\xi\rho\over x} \widetilde{\mathcal{W}}_{yv}(t,y^*,v).
$$
\end{theorem}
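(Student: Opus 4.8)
The plan is to treat this as a verification result organised around the pointwise Fenchel (Legendre) transform in the wealth variable. I would define the candidate
\[
\mathcal{W}(t,x,v):=\inf_{y>0}\big[\widetilde{\mathcal{W}}(t,y,v)+xy\big]
\]
and first establish that the infimum is attained at a unique interior point. Since $\widetilde{\mathcal{W}}$ is strictly convex in $y$, the map $y\mapsto\widetilde{\mathcal{W}}_y(t,y,v)$ is strictly increasing, and the boundary hypotheses $\widetilde{\mathcal{W}}_y(t,0,v)=-\infty$, $\widetilde{\mathcal{W}}_y(t,\infty,v)=0$ make it a bijection from $(0,\infty)$ onto $(-\infty,0)$. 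Hence for every $x>0$ there is a unique $y^*=y(t,x,v)$ solving $\widetilde{\mathcal{W}}_y(t,y,v)+x=0$, and this is the minimiser, so $\mathcal{W}(t,x,v)=\widetilde{\mathcal{W}}(t,y^*,v)+xy^*$. Applying the implicit function theorem to $\widetilde{\mathcal{W}}_y(t,y,v)+x=0$, using $\widetilde{\mathcal{W}}_{yy}>0$ together with $\widetilde{\mathcal{W}}\in C^{1,2,2}$, would show $y^*$ is as smooth as required in $(t,x,v)$, giving $\mathcal{W}\in C^{1,2,2}$.

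Next I would compute the derivatives of $\mathcal{W}$ in terms of those of $\widetilde{\mathcal{W}}$, all evaluated at $(t,y^*,v)$. The envelope identity removes the $y^*$-derivatives from the first-order terms, yielding $\mathcal{W}_x=y^*$, $\mathcal{W}_t=\widetilde{\mathcal{W}}_t$ and $\mathcal{W}_v=\widetilde{\mathcal{W}}_v$. Differentiating $\widetilde{\mathcal{W}}_y+x=0$ in $x$ and in $v$ gives $y^*_x=-1/\widetilde{\mathcal{W}}_{yy}$ and $y^*_v=-\widetilde{\mathcal{W}}_{yv}/\widetilde{\mathcal{W}}_{yy}$, hence
\[
\mathcal{W}_{xx}=-\frac{1}{\widetilde{\mathcal{W}}_{yy}},\qquad
\mathcal{W}_{xv}=-\frac{\widetilde{\mathcal{W}}_{yv}}{\widetilde{\mathcal{W}}_{yy}},\qquad
\mathcal{W}_{vv}=\widetilde{\mathcal{W}}_{vv}-\frac{\widetilde{\mathcal{W}}_{yv}^2}{\widetilde{\mathcal{W}}_{yy}}.
\]

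I would then substitute these six relations into the primal equation \eqref{X-HJB}. The linear terms map directly onto their counterparts in \eqref{Y-HJB-2} once $x=-\widetilde{\mathcal{W}}_y$ turns $rx\mathcal{W}_x$ into $-ry^*\widetilde{\mathcal{W}}_y$. The only real computation is the nonlinear term: with $-1/(2\mathcal{W}_{xx})=\widetilde{\mathcal{W}}_{yy}/2$, expanding $-\tfrac12\mathcal{W}_{xx}^{-1}[A\sqrt{v}\mathcal{W}_x+\xi\rho\sqrt{v}\mathcal{W}_{xv}]^2$ produces exactly $\tfrac12 A^2 v (y^*)^2\widetilde{\mathcal{W}}_{yy}-A\xi\rho v\,y^*\widetilde{\mathcal{W}}_{yv}+\tfrac12\xi^2\rho^2 v\,\widetilde{\mathcal{W}}_{yv}^2/\widetilde{\mathcal{W}}_{yy}$, whose last piece combines with the $-\tfrac12\xi^2 v\,\widetilde{\mathcal{W}}_{yv}^2/\widetilde{\mathcal{W}}_{yy}$ coming from $\tfrac12\xi^2 v\,\mathcal{W}_{vv}$ to reconstruct the $(1-\rho^2)$ factor in \eqref{Y-HJB-2}. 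Thus \eqref{X-HJB} holds if and only if \eqref{Y-HJB-2} does, which is true by hypothesis. The terminal condition follows from the Fenchel--Moreau theorem: at $t=T$ we have $\widetilde{\mathcal{W}}(T,\cdot,v)=\widetilde{U}$, and since $U$ is concave, $\mathcal{W}(T,x,v)=\inf_{y}[\widetilde{U}(y)+xy]=U(x)$, with minimiser $y^*$ satisfying $\widetilde{U}'(y^*)=-x$. Finally, inserting $\mathcal{W}_x,\mathcal{W}_{xx},\mathcal{W}_{xv}$ into the maximiser \eqref{pi} and simplifying yields the stated feedback control.

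Setting aside the routine differentiations, the main obstacle is the genuine \emph{verification} step, namely passing from ``$\mathcal{W}$ solves \eqref{X-HJB} with the correct terminal data'' to ``$\mathcal{W}$ equals the value function \eqref{optimization-W}'' together with optimality of the feedback. This requires showing the feedback control is admissible, that the associated wealth SDE is well posed, and that the local martingale obtained by applying It\^{o}'s formula to $\mathcal{W}(t,X_t,v_t)$ along an arbitrary admissible $\pi$ is a true supermartingale and a martingale under the feedback. A localisation plus uniform-integrability argument is needed, and this is where $\widetilde{\mathcal{W}}_{yy}>0$, the Feller condition on $v$, and the growth of $\widetilde{U}$ near $y=0,\infty$ enter; the behaviour of $y^*$ as $x\to0$ and $x\to\infty$ must also be controlled so that $\mathcal{W}\in C^{1,2,2}$ holds up to the relevant boundaries. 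The PDE identity itself is essentially forced once the transform is set up correctly.
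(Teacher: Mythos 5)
Your proposal follows essentially the same route as the paper's own proof: define the conjugate $\inf_{y>0}[\widetilde{\mathcal{W}}(t,y,v)+xy]$, use strict convexity and the boundary behaviour of $\widetilde{\mathcal{W}}_y$ plus the implicit function theorem to get a unique smooth minimiser $y^*$, derive the same six derivative relations, substitute them to pass between \eqref{X-HJB} and \eqref{Y-HJB-2}, obtain the terminal condition by conjugacy of $U$ and $\widetilde{U}$, and close with a verification theorem — indeed you are more explicit than the paper, which invokes verification in one line, about exactly what that step requires. One small caveat: a careful substitution into \eqref{pi} gives $\pi=\frac{A}{x}y^*\widetilde{\mathcal{W}}_{yy}(t,y^*,v)-\frac{\xi\rho}{x}\widetilde{\mathcal{W}}_{yv}(t,y^*,v)$, with a minus sign on the second term (consistent with \eqref{pi-expression} and with the optimal control $\pi_t=\frac{A}{1-p}+\xi\rho D(t)$ in Corollary \ref{power-primal-analytical}), so the plus sign in the theorem's displayed control is a typo in the statement rather than something your computation should reproduce.
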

\begin{proof}
Define
\begin{equation*}\label{dual-relation}
\widehat{\mathcal{W}}(t,x,v)=\inf\limits_{y>0}\left[\widetilde{\mathcal{W}}(t,y,v)+xy\right].
\end{equation*}
Since $\widetilde{\mathcal{W}}\in C^{1,2,2}$ and is strictly convex in $y$ and $\widetilde{\mathcal{W}}_y(t,0,v)=-\infty$ and $\widetilde{\mathcal{W}}_y(t,\infty,v)=0$, we have
\begin{equation*}
\widehat{\mathcal{W}}(t,x,v)=\widetilde{\mathcal{W}}(t,y^*,v)+xy^*,
\end{equation*}
where $y^*=y(t,x,v)$ satisfies
$ \widetilde{\mathcal{W}}_y(t,y,v)+ x = 0$.
Using the Implicit Function Theorem, we have  $y\in C^{1,2,2}$ and therefore $\widehat{\mathcal{W}}\in C^{1,2,2}$.
Simple calculus shows that
$$
\frac{\partial \widehat{\mathcal{W}}}{\partial t}
= \frac{\partial\widetilde{\mathcal{W}}}{\partial t},\quad
\widehat{\mathcal{W}}_x = y,\quad
\widehat{\mathcal{W}}_v = \widetilde{\mathcal{W}}_v,
$$
and
$$
\widehat{\mathcal{W}}_{xx} = -{1\over \widetilde{\mathcal{W}}_{yy}},\quad
\widehat{\mathcal{W}}_{xv} = -{\widetilde{\mathcal{W}}_{yv} \over
\widetilde{\mathcal{W}}_{yy}},\quad
\widehat{\mathcal{W}}_{vv} = -{(\widetilde{\mathcal{W}}_{yv})^2\over \widetilde{\mathcal{W}}_{yy}} + \widetilde{\mathcal{W}}_{vv}.
$$
Substituting these relations into \eqref{Y-HJB-2} gives that $\widehat{\mathcal{W}}$ satisfies the HJB equation \eqref{X-HJB}. Moreover it follows from the conjugate equation \eqref{dual-relation} and $\widetilde{W}(T,y,v)=\widetilde{U}(y)$ that $\widehat{\mathcal{W}}(T,x,v)=U(x)$.
The verification theorem then gives $\mathcal{W}(t,x,v)=\widehat{\mathcal{W}}(t,x,v)$. The optimal feedback control is derived from (\ref{pi}) and the dual relations of the derivatives.
\end{proof}

\begin{rem} Theorem \ref{complete-dual-transform} shows there is no duality gap if  the dual control $\gamma$ takes the form \eqref{find-gammastar-2}. This is interesting in theory and is useful if one knows $\widetilde{\mathcal{W}}$. In general, it is highly unlikely one can find $\widetilde{\mathcal{W}}$, which requires to solve an equally difficult nonlinear PDE \eqref{Y-HJB-2}. However, if one can choose a dual control $\gamma$ which gives a good approximation to $\widetilde{\mathcal{W}}$, then one may follow  Theorem \ref{complete-dual-transform} to get good approximations to the primal value function and the optimal feedback control. This is essentially the idea we use to design a Monte Carlo method for computing tight lower and upper bounds of  the primal value function in the next section. For power utility $U(x)=\frac{x^p}{p}$, we can find a closed-form solution of \eqref{Y-HJB-2} and therefore solve the primal problem with the dual method. This is explained in the next result.
\end{rem}

\begin{cor}\label{power-primal-analytical}
For power utility $U(x) = \frac{x^p}{p}$, Theorem \ref{complete-dual-transform} gives the closed-form formula for the primal value of \eqref{optimization-W} as follows:
\begin{equation}\label{power-solution-1}
\mathcal{W}(t,x,v) = \frac{x^p}{p}\exp\Big[(1-p)(C(t)+D(t)v) \Big],
\end{equation}
where $C(t)$ and $D(t)$ are  given by \eqref{Ct-2} and \eqref{Dt-2}, respectively, with $\underline{t}=0$, $\bar{t}=T$, $f_1=0$, $f_2=0$,
$d_1=-\kappa\theta,\, d_2=rp/(p-1)$, and
$$a=\frac{1}{2}\xi^2[p(1-\rho^2)-1],\quad b=\kappa-A\xi\rho\frac{p}{1-p},\quad \eta =-\frac{1}{2}\frac{p}{(1-p)^2}A^2,$$
provided $b^2-4a\eta>0$ and $\frac{m_1}{m_2} \notin [e^{-k_1T},1]$. The optimal control at time $t$ is given by
$$ \pi_t= {A \over 1-p} + \xi\rho D(t).
$$
\end{cor}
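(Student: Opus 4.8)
The plan is to produce an explicit solution of the reduced dual HJB equation \eqref{Y-HJB-2} for power utility and then read off the primal value and control from Theorem \ref{complete-dual-transform}. First I would compute the dual utility from \eqref{dual-func}: maximizing $x^p/p-xy$ over $x\ge 0$ gives the maximizer $x=y^{1/(p-1)}$, whence $\widetilde{U}(y)=\frac{1-p}{p}\,y^{q}$ with $q:=p/(p-1)<0$. This is decreasing and convex with $\widetilde{U}(\infty)=0$, as required. Guided by the homothetic scaling of $\widetilde U$ in $y$ and the affine dependence of the Heston generator on $v$, I would look for a solution of the separable form
\begin{equation*}
\widetilde{\mathcal{W}}(t,y,v)=\frac{1-p}{p}\,y^{q}\exp\big(C(t)+D(t)v\big),
\end{equation*}
which meets the terminal condition $\widetilde{\mathcal{W}}(T,y,v)=\widetilde U(y)$ exactly when $C(T)=D(T)=0$.

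Next I would substitute this ansatz into \eqref{Y-HJB-2}. Every derivative appearing there is a scalar multiple of $\widetilde{\mathcal{W}}$; for instance $\widetilde{\mathcal{W}}_y=(q/y)\widetilde{\mathcal{W}}$, $\widetilde{\mathcal{W}}_{yy}=\frac{q(q-1)}{y^2}\widetilde{\mathcal{W}}$, $\widetilde{\mathcal{W}}_{yv}=\frac{qD}{y}\widetilde{\mathcal{W}}$ and $\widetilde{\mathcal{W}}_{vv}=D^2\widetilde{\mathcal{W}}$, so after dividing by $\widetilde{\mathcal{W}}$ the equation collapses to an affine identity in $v$. Equating the coefficients of $v^0$ and $v^1$ to zero then yields the coupled system $C'(t)=d_1 D(t)+d_2$ and the Riccati equation $D'(t)=a D(t)^2+bD(t)+\eta$. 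Using $q/(q-1)=p$ and $q(q-1)=p/(1-p)^2$ to simplify the $D^2$, $D$ and constant coefficients, I would check that these reproduce exactly the data $a=\frac{1}{2}\xi^2[p(1-\rho^2)-1]$, $b=\kappa-A\xi\rho\,p/(1-p)$, $\eta=-\frac{1}{2}pA^2/(1-p)^2$, $d_1=-\kappa\theta$, $d_2=rp/(p-1)$ and $f_1=f_2=0$ listed in the statement; the explicit Riccati solution of Appendix A then furnishes $C$ and $D$ through \eqref{Ct-2}--\eqref{Dt-2} under the solvability hypotheses $b^2-4a\eta>0$ and $m_1/m_2\notin[e^{-k_1T},1]$.

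Finally I would verify that this $\widetilde{\mathcal{W}}$ fulfils the hypotheses of Theorem \ref{complete-dual-transform}: it is $C^{1,2,2}$, and since $q<0$ forces $q(q-1)>0$ we have $\widetilde{\mathcal{W}}_{yy}>0$, so $\widetilde{\mathcal{W}}$ is strictly convex in $y$, while $\widetilde{\mathcal{W}}_y=-y^{q-1}\exp(C+Dv)$ with $q-1<0$ satisfies $\widetilde{\mathcal{W}}_y(t,0,v)=-\infty$ and $\widetilde{\mathcal{W}}_y(t,\infty,v)=0$. Solving $\widetilde{\mathcal{W}}_y(t,y,v)+x=0$ gives the explicit minimizer $y^*=\big(x\,e^{-(C(t)+D(t)v)}\big)^{p-1}$; inserting it into $\mathcal{W}=\widetilde{\mathcal{W}}(t,y^*,v)+xy^*$, the powers of $x$ and of $e^{C+Dv}$ recombine to produce \eqref{power-solution-1}. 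For the feedback control I would use the formula of Theorem \ref{complete-dual-transform} together with the identity $\widetilde{\mathcal{W}}_y(t,y^*,v)=-x$, which gives $\widetilde{\mathcal{W}}(t,y^*,v)/y^*=-x/q$ and hence $y^*\widetilde{\mathcal{W}}_{yy}=-(q-1)x$ and $\widetilde{\mathcal{W}}_{yv}=-Dx$; the dependence on $y^*$ cancels and leaves $\pi_t=A/(1-p)+\xi\rho D(t)$. I expect no conceptual difficulty here, as the result is a verification once the ansatz is chosen; the only genuinely delicate step is the sign bookkeeping in the coefficient matching of the Riccati system, where the substitution $q=p/(p-1)$ must be tracked carefully to land on precisely the constants $a,b,\eta,d_1,d_2$ quoted above.
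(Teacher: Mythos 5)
Your proposal is correct and follows essentially the same route as the paper's proof: the separable ansatz $\widetilde{\mathcal{W}}(t,y,v)=\widetilde U(y)\exp(C(t)+D(t)v)$ substituted into \eqref{Y-HJB-2}, matching the $v^0$ and $v^1$ coefficients to get the linear ODE for $C$ and the Riccati equation for $D$ with exactly the constants $a,b,\eta,d_1,d_2$ of the statement, the explicit minimizer $y^*=\left[x e^{-(C(t)+D(t)v)}\right]^{p-1}$, and recombination via Theorem \ref{complete-dual-transform} (you additionally verify the convexity and limit hypotheses of that theorem, which the paper leaves implicit). One caveat on the control: your correct intermediate values $y^*\widetilde{\mathcal{W}}_{yy}(t,y^*,v)=-(q-1)x$ and $\widetilde{\mathcal{W}}_{yv}(t,y^*,v)=-D(t)x$, inserted into the feedback formula \emph{as printed} in Theorem \ref{complete-dual-transform}, would give $\pi=A/(1-p)-\xi\rho D(t)$; the printed formula carries a sign typo, since deriving it from \eqref{pi} with $\mathcal{W}_{xx}=-1/\widetilde{\mathcal{W}}_{yy}$ and $\mathcal{W}_{xv}=-\widetilde{\mathcal{W}}_{yv}/\widetilde{\mathcal{W}}_{yy}$ yields $\pi=\frac{A}{x}y^*\widetilde{\mathcal{W}}_{yy}-\frac{\xi\rho}{x}\widetilde{\mathcal{W}}_{yv}$ (consistent with \eqref{pi-expression}), so your final answer $\pi_t=A/(1-p)+\xi\rho D(t)$ is the correct one, but you should state explicitly that you are using this corrected sign rather than the formula as written.
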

\begin{proof}
The dual function of $U$ is given by $\widetilde{U}(y)=-(1/q)y^q$, where
$q=p/(p-1)$. We may set $\widetilde{\mathcal{W}}(t,y,v)=\widetilde{U}(y)\tilde f(t,v)$ and substitute it into the equation (\ref{Y-HJB-2}) to get a simplified equation for $\tilde  f$:
\begin{equation}\label{e28}
{\partial \tilde  f\over \partial t} - rq \tilde  f + \kappa(\theta -v) \tilde  f_v
+{1\over 2}q(q-1)\tilde  f A^2 v - {q\over 2(q-1)} \xi^2v {\tilde  f_v^2\over \tilde  f}(1-\rho^2) -\, qAv\xi \rho \tilde  f_v + {1\over 2}\tilde  f_{vv}\xi^2 v =0
\end{equation}
with the terminal condition $\tilde  f(T,v)=1$.
We can solve equation (\ref{e28}) by setting $\tilde  f(t,v)=\exp(C(t)+D(t)v)$ and plugging $\tilde  f$ into (\ref{e28}) to get two ODEs for $C$ and $D$ as follows:
\begin{equation*}
\left\{
\begin{array}{l}
C'(t)=-\kappa\theta D(t) - r\frac{p}{1-p},\\
C(T)=0,
\end{array}\right.
\end{equation*}
and
\begin{equation*}
\left\{
\begin{array}{l}
D'(t)=D^2(t)\frac{1}{2}\xi^2[p(1-\rho^2)-1] + D(t)(\kappa-A\xi\rho\frac{p}{1-p})-\frac{1}{2}\frac{p}{(1-p)^2}A^2,\\
D(T)=0.
\end{array}\right.
\end{equation*}
 We can easily find $C(t)$ once $D(t)$ is known and  solve the Riccati equation to get  a closed-form solution $D(t)$, see Appendix.
Next  we solve the  equation $\widetilde{\mathcal{W}}_y + x = 0$ to get
\[
y^*=\left[x\exp(-C(t)-D(t)v)\right]^{p-1}.
\]
Using Theorem \ref{complete-dual-transform}, we obtain the primal value function by
\begin{equation*}\label{power-primal-value-2}
\mathcal{W}(t,x,v) = \widetilde{W}(t,y^*,v)+xy^* =\frac{x^p}{p}\exp[(1-p)(C(t)+D(t)v)].
\end{equation*}
\end{proof}

\begin{rem}
Corollary \ref{power-primal-analytical} shows that the dual control method of Theorem \ref{complete-dual-transform} gives the closed-form formula for the primal value function with the power utility.
After communicating the notations, we see that formula \eqref{power-solution-1} is the same as that in Prop 5.2 of \cite{Kraft2005}.
\end{rem}

 \section{Monte Carlo lower and upper bounds}\label{section:lower-upper-bound}
For general utility functions, it seems impossible we can solve the primal problem by using Theorem \ref{complete-dual-transform} as the dual problem is equally difficult.  Note that
\begin{equation}\label{Wuppperbound}
\sup_{\pi} E[ U(X_T)] \leq \inf_{y}( \inf_{\gamma}E[\widetilde{U}(Y_T)] + xy)
\leq \inf_{y}( E[\widetilde{U}(Y_T)] + xy),
\end{equation}
for all dual controls $\gamma$. For every fixed $\gamma$, define
\begin{equation}\label{new-dual-val-function}
\mathcal{Z}(t,y,v) = E_{t,y,v}[\widetilde{U}(Y_T)].
\end{equation}
Then $\mathcal{Z}$ is an upper bound and can be easily computed with simulation. Note that $\mathcal{Z}(t,y,v)$ depends on the choice of dual control $\gamma$.
Denote the conjugate function of $\mathcal{Z}(t,y,v)$ for fixed $t$ and $v$ by
\begin{equation}\label{Wuppperbound-C}
\overline{\mathcal{W}}(t,x,v)=\inf_{y>0} [\mathcal{Z}(t,y,v) + xy].
\end{equation}
The following theorem presents the tight lower and upper bounds on the primal value function.
\begin{theorem}
Let $\mathcal{S}$ be a set of admissible dual controls and $\overline{\mathcal{W}}(t,x,v)$ be given by \eqref{Wuppperbound-C}.
Then the optimal value function $\mathcal{W}(t,x,v)$ defined in \eqref{optimization-W} satisfies
\begin{equation}
 \mathcal{W}(t,x,v) \leq \inf_{\gamma \in \mathcal{S}}  \overline{\mathcal{W}}(t,x,v). \label{tightUB-C}
 \end{equation}
Furthermore,  assume that $\mathcal{Z}(t,y,v)$ given by $\eqref{new-dual-val-function}$ is twice continuously differentiable and strictly convex for $y>0$ with fixed $t$ and $v$, $y^* = y(t,x,v;\gamma)$ is the solution of the equation
\begin{equation}\label{nonlinear-algebraic-eq}
 \mathcal{Z}_y(t,y,v) + x = 0,
\end{equation}
the feedback control $\bar{\pi}(t,x,v)$, defined by
\begin{equation}\label{pi-expression}
\bar{\pi}(t,x,v)
:= A\frac{y^*}{x}\mathcal{Z}_{yy}(t,y^{*},v) - \frac{\xi\rho}{x}\mathcal{Z}_{yv}(t,y^{*},v),
\end{equation}
 is admissible, and $\bar{X}$ is the unique strong solution of SDE \eqref{X-process} with the feedback control  $\pi_t = \bar{\pi}(t,\bar{X}_t,v_t)$ for $t\in [0,T]$. Define
\begin{equation}\label{Wlowerbound-C}
\underline{\mathcal{W}}(t,x,v):= E_{t,x,v}[U(\bar{X}_{T})].
\end{equation}
Then the optimal value function $\mathcal{W}(t,x,v)$ satisfies
\begin{equation}
 \mathcal{W}(t,x,v) \geq \sup_{\gamma \in \mathcal{S}}  \underline{\mathcal{W}}(t,x,v). \label{tightLB-C}
\end{equation}
\end{theorem}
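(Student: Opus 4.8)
The plan is to treat the two bounds separately, as they rest on independent arguments: the upper bound \eqref{tightUB-C} is a pure weak-duality estimate, while the lower bound \eqref{tightLB-C} follows immediately from the feasibility of the control $\bar\pi$.

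For the upper bound I would start from the Fenchel--Young inequality built into the definition \eqref{dual-func} of the dual utility, namely $U(x)\le\widetilde U(y)+xy$ for all $x,y\ge 0$. Applying this pointwise with $x=X_T$ and $y=Y_T$ and taking the conditional expectation $E_{t,x,v}$ gives
\[
E_{t,x,v}[U(X_T)]\le E_{t,x,v}[\widetilde U(Y_T)]+E_{t,x,v}[X_T Y_T].
\]
The second step is to bound the cross term. By the very construction of the dual process \eqref{Y-process}---the drift $\alpha_t=-r$ and the volatility $\beta_t=-A\sqrt{v_t}-\rho\gamma_t$ were chosen precisely so that $XY$ has zero drift---the product $XY$ is a nonnegative local martingale, hence a supermartingale, and therefore $E_{t,x,v}[X_TY_T]\le X_tY_t=xy$. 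Combining the two displays yields $E_{t,x,v}[U(X_T)]\le\mathcal Z(t,y,v)+xy$ for every admissible $\pi$, every $\gamma\in\mathcal S$, and every $y>0$. For fixed $\gamma$ and $y$ the right-hand side is free of $\pi$, so taking the supremum over $\pi$ gives $\mathcal W(t,x,v)\le\mathcal Z(t,y,v)+xy$; taking the infimum over $y>0$ and then over $\gamma\in\mathcal S$ produces \eqref{tightUB-C}.

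For the lower bound the argument is essentially a feasibility statement. Under the stated hypotheses the feedback control $\bar\pi$ of \eqref{pi-expression} is admissible and the closed-loop SDE \eqref{X-process} has the unique strong solution $\bar X$, so $\pi_t=\bar\pi(t,\bar X_t,v_t)$ is one particular element of the admissible set $\Pi_t$. By the definition \eqref{optimization-W} of the value function as a supremum,
\[
\mathcal W(t,x,v)\ge E_{t,x,v}[U(\bar X_T)]=\underline{\mathcal W}(t,x,v).
\]
Since this holds for every $\gamma\in\mathcal S$ while the left-hand side is independent of $\gamma$, taking the supremum over $\gamma$ on the right gives \eqref{tightLB-C}.

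The chain of inequalities is soft; the genuine technical content sits in the hypotheses. For the upper bound the one delicate point is the passage from local martingale to supermartingale: one must secure enough integrability of $X_TY_T$---using the Feller condition $2\kappa\theta\ge\xi^2$ together with the admissibility restrictions on $\pi$ and $\gamma$---so that the supermartingale inequality $E_{t,x,v}[X_TY_T]\le xy$, rather than merely the local-martingale property, is actually available. For the lower bound the substantive work, namely existence and uniqueness of $\bar X$ and admissibility of $\bar\pi$ (which hinge on the regularity and strict convexity of $\mathcal Z$ and the solvability of \eqref{nonlinear-algebraic-eq}), is assumed outright in the statement. I therefore expect the main obstacle to lie not in the displayed inequalities but in verifying, in any concrete application, that these standing assumptions on $\mathcal Z$ and $\bar\pi$ genuinely hold.
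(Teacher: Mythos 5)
Your proposal is correct and follows essentially the same route as the paper: the paper's proof simply cites the weak duality \eqref{Wuppperbound} (itself established in Section 2 via exactly your Fenchel--Young plus supermartingale argument) together with the definitions of $\overline{\mathcal{W}}$ and $\underline{\mathcal{W}}$, while you spell out those details in place. One small simplification: since $X\geq 0$ and $Y>0$, the product $XY$ is a nonnegative local martingale and hence a supermartingale by Fatou's lemma, so no additional integrability of $X_TY_T$ needs to be secured.
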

\begin{proof}
It is obvious from \eqref{Wuppperbound} and the definitions of $\overline{\mathcal{W}}(t,x,v)$ and
$\underline{\mathcal{W}}(t,x,v)$.
\end{proof}

\begin{rem}
Clearly, if $\mathcal{S} \subset \widetilde{\mathcal{S}}$, then
\begin{equation}
\mathcal{W}(t,x,v)\leq \inf_{\gamma \in \widetilde{\mathcal{S}}}\overline{\mathcal{W}}(t,x,v) \leq \inf_{\gamma \in \mathcal{S}}\overline{\mathcal{W}}(t,x,v)
\end{equation}
Using $\widetilde{\mathcal{S}}$ instead of $\mathcal{S}$ gives a tighter upper bound but is more expensive in computation. The same applies to the lower bound. For numerical tests in Section \ref{sec:numerical-test}, we  choose the set $\mathcal{S}$ to contain the following  dual controls: $\gamma_{t}=c(t),\, c(t)\sqrt{v_{t}},\, c(t)v_{t}$, where $c$ is a piecewise constant function
 \begin{equation}\label{piecewise-constant-c}
  c(t)=\sum_{j=1}^n c_j 1_{(t_{j-1},t_j]}(t),
 \end{equation}
with $0=t_0<t_1<\ldots<t_{n-1}<t_n=T$ for $n\geq 1$ and $c_j$, $j=1,\ldots,n$ being arbitrary constants.
\end{rem}

\begin{rem} \label{rk3.2}
If the dual function $\widetilde{U}$ has the form
$
 \widetilde{U}(y) = \sum_{i=1}^K  \widetilde{U}_i(y),
$
where $\widetilde{U}_i(y)=-(1/q_i)y^{q_i}$ for $q_i<0$ and $i=1,\ldots,K$, then
$$\mathcal{Z}(t,y,v) = \sum_{i=1}^K  \widetilde{U}_i(y) F_i(t,v),$$
where $F_i(t,v)=E_{t,v}[\exp(q_iM_{t,T})]$. The upper bound is given by
$$  \overline{\mathcal{W}}(t,x,v) = \mathcal{Z}(t,y^*,v) + xy^*$$
and $y^*$ is the solution of equation (\ref{nonlinear-algebraic-eq}):
$ - \sum_{i=1}^K y^{q_i-1} F_i(t,v) + x =0. $
The feedback control for the lower bound is given by (\ref{pi-expression}):
$$
x\bar{\pi}(t,x,v)
= \sum_{i=1}^K (y^*)^{q_i-1}
\left(A(1- q_i) F_i(t,v)
+\xi\rho   {\partial \over \partial v}F_i(t,v)\right).
$$
For fixed dual control $\gamma_t$, $0\leq t\leq T$, we can use the Monte Carlo method  to compute $F_i(t,v)$  and approximate $ {\partial \over \partial v}F_i(t,v)$ with the finite difference
$(F_i(t,v+h)-F_i(t,v-h))/(2h)$ for  sufficiently small $h>0$. If $K=1$, we have a closed-form solution $y^*$. If $K>1$, we can use the Newton-Raphson method to find $y^*$.
\end{rem}

\begin{rem} \label{rk3.3}
If the dual function $\widetilde{U}$ is Lipschitz continuous, then we may use the pathwise differentiation method to compute $\mathcal{Z}_y(t,y,v)$, that is,
$$ \mathcal{Z}_y(t,y,v) = E_{t,y,v}[\widetilde{U}'(y\exp(M_{t,T})) \exp(M_{t,T}) ].$$
For example, the dual function of Yarri utility (see \eqref{threshold-utility}) is given by
$\widetilde{U}(y)=L(1-y)^+$, we have $\widetilde{U}'(y)=-L1_{\{y<1\}}$, where $1_S$ is an indicator which equals 1 if  $S$ happens and 0 otherwise. We can then approximate $\mathcal{Z}_{yy}(t,y,v) $  and $\mathcal{Z}_{yv}(t,y,v) $ with finite differences
$( \mathcal{Z}_y(t,y+h,v) - \mathcal{Z}_y(t,y-h,v) )/(2h)$ and $( \mathcal{Z}_y(t,y,v+h) - \mathcal{Z}_y(t,y,v-h) )/(2h)$, respectively, for sufficiently small $h>0$.

\end{rem}

The Monte-Carlo methods can be used to find the tight lower and upper bounds, analogously to the algorithm developed by \cite{Ma2017}. To implement the method, we need to discretize the dual process $Y$ in \eqref{Y-process}.

Although under the Feller condition $2\kappa\theta \geq \xi^2$, the stochastic volatility process $v_t$ is strictly positive, the discretization of the SDE will draw $v_t$ below zero. To deal with this situation, we apply the full-truncation Euler method first proposed in \cite{Lord2010}, which outperforms many biased schemes in terms of bias and root-mean-squared error. The stochastic volatility $v_t$ can be discretized in the following form
\begin{equation*}\label{vt-discretization}
v_{t+\Delta t} = v_t + \kappa(\theta-v_t^+) \Delta t + \xi \sqrt{v_t^+} \sqrt{\Delta t} Z_1,
\end{equation*}
where $v_t^+ = \max(0,v_t)$ and $Z_1$ is a standard normal variate,
the processes $Y_t$ and $X_t$ by the Euler method,
\begin{eqnarray*}\label{Yt-discretization}
Y_{t+\Delta t} &=& Y_t  \exp\left[\left(-r +\frac{1}{2}(\rho^2-1)\gamma_t^2 - \frac{1}{2}A^2 v_t \right)\Delta t\right.\nonumber \\
& &\left. -\, \big(A\sqrt{v_t^+}+\gamma_t \rho\big) \big(\rho \sqrt{\Delta t} Z_1 + \sqrt{1-\rho^2}\sqrt{\Delta t} Z_2\big) + \gamma_t \sqrt{\Delta t} Z_1 \right],\\
\label{Xt-discretization}
\bar{X}_{t+\Delta t} &=& \bar{X}_t + r \bar{X}_t \Delta t + \bar{\pi}_t \bar{X}_t \sqrt{v_t^+}\left(A\sqrt{v_t^+} \Delta t + \rho \sqrt{\Delta t} Z_1 + \sqrt{1-\rho^2}\sqrt{\Delta t} Z_2 \right),
\end{eqnarray*}
where $Z_1$ and $Z_2$ are two independent standard normal variables. For wealth process $\bar{X}_t$ driven by $\bar{\pi}_t$, it is possible that an investor loses all his money during the investment period. Thus if $\bar{X}_t\leq 0$, we stop generating the paths, and set $\bar{X}_T=0$ for the current path.

Next we describe the Monte-Carlo methods for computing the tight lower and upper bounds at time $0$. The tight lower and upper bounds at other time $t$ can be computed similarly. Assume $X_0=x$, $v_0=v$ and the dual utility function $\widetilde{U}$ in (\ref{dual-func}) are known. The dual control $\gamma_t=c(t)$ or $c(t)\sqrt{v_t}$ or $c(t)v_t$, where $c$ is a piecewise constant function given by (\ref{piecewise-constant-c}). Denote by $\mathcal{S}$ the set of vectors $\mathbf{C}:=(c_1,\ldots,c_n)$ which form the coefficients of the function $c$.

\bigskip
\noindent {\bf Monte-Carlo method for computing tight lower and upper bounds:}
\begin{itemize}
\item[] Step 1: Fix a vector $\mathbf{C}\in \mathcal{S}$ and a form of dual control $\gamma_t$.
\item[] Step 2: Generate  $M$ sample paths of Brownian motion $W^s$ and $W^v$, discretize SDE (\ref{Y-process}),  compute $Y_T$ with  $Y_0=y$ and  the average derivative:
    $$
    \frac{\partial{\mathcal{Z}(0,y,v)}}{{\partial y}}\approx \frac{1}{y}
    \frac{1}{M}\sum_{\ell=1}^{M} Y_T\widetilde{U}'(Y_T).
    $$
\item[] Step 3: Use the bisection method to solve equation \eqref{nonlinear-algebraic-eq}  and get the solution $y \approx y^*$.
\item[] Step 4:  Compute the upper bound
     $$       \overline{\mathcal{W}}(0,x,v)\approx \mathcal{Z}(0,y^*,v)+xy^*.$$
\item[] Step 5: Find the control process $\bar{\pi}$ in \eqref{pi-expression} and generate the wealth process $\bar{X}$ in (\ref{X-process}).
\item[] Step 6: Compute the lower bound
     $$      \underline{\mathcal{W}}(t,x,v)\approx \frac{1}{M}\sum_{\ell=1}^{M} U(\bar{X}_T).$$
\item[] Step 7:  Repeat Steps 1 to 6 with different $\mathbf{C}\in \mathcal{S}$  to derive the tight lower bound $\sup_{\mathbf{C}\in \mathcal{S}} \underline{\mathcal{W}}(0,x,v)$ and
the tight upper bound  $\inf_{\mathbf{C}\in \mathcal{S}} \overline{\mathcal{W}}(0,x,v)$.
\end{itemize}

\begin{rem} It is  much more time consuming to compute the tight lower bound than to the tight upper bound. The reason is that one has to generate sample paths of the wealth process $\bar X$ and control process $\bar\pi$, which requires to solve equation \eqref{nonlinear-algebraic-eq} at all grid points of time, not just at $t=0$ as in the case of computing the tight upper bound. One technique to speed up is to use a four-dimensional matrix $\bar\pi_{i_t \times j_x \times k_v \times l_C}$ to pre-save the values of $\bar\pi$ on a lattice, and then apply linear interpolation to approximate the exact values we need while generating sample paths of $\bar X$.
\end{rem}

\section{Closed-form upper bounds}\label{sec:examples}

For general dual controls $\gamma_t$, $0\leq t\leq T$, we have to use the Monte Carlo method to compute the upper bound $\mathcal{Z}(t,y,v) $ of  (\ref{new-dual-val-function}). However, for a class of special dual controls and utility functions, we can find the  upper bound in closed-form.
Since $Y$ satisfies a linear SDE \eqref{Y-process} and $\widetilde{U}$ is a decreasing and convex function, $\mathcal{Z}(t,y,v)$ is a decreasing and convex function for $y>0$ with fixed $t$ and $v$. Moreover, the Feynman-Kac theorem implies that $\mathcal{Z}$ satisfies the following linear PDE:
\begin{equation}\label{BS-equation}
  \mathcal{Z}_t - ry\mathcal{Z}_y + \kappa(\theta-v)\mathcal{Z}_v +\frac{1}{2}\mathcal{Z}_{yy}y^2[A^2v +\gamma^2(1-\rho^2)]  + \mathcal{Z}_{yv}y[\gamma\xi\sqrt{v}(1-\rho^2)-Av\xi\rho] +\frac{1}{2}\mathcal{Z}_{vv}\xi^2 v=0
\end{equation}
with terminal condition $\mathcal{Z}(T,y,v)=\widetilde{U}(y)$.
The choice  $\gamma_t=c(t)\sqrt{v_t}$, where $c$ is a piecewise constant function, is particularly interesting as we can get the closed-form solution of the equation (\ref{BS-equation}) if $\widetilde{U}$ is a linear combination of power functions. Specifically, if
$$\gamma_t=c(t)\sqrt{v_t},$$
where  $c$ is a piecewise constant function defined by \eqref{piecewise-constant-c}, and
\begin{equation}\label{sumofUi}
 \widetilde{U}(y) = \sum_{i=1}^K \widetilde{U}_i(y),
\end{equation}
where $\widetilde{U}_i(y)=-(1/q_i)y^{q_i}$ with $q_i<0$   for $i=1,\ldots,K$. The solution of (\ref{BS-equation}) is given by
\begin{equation}\label{sumofsolution}
 \mathcal{Z}(t,y,v) = \sum_{i=1}^K  \widetilde{U}_i(y) \exp(C_i(t)+D_i(t)v),
\end{equation}
where $C_i$ and $D_i$ satisfy the following ODEs
\[
C_i'(t)=d_{1i} D_i(t) + d_{2i}, \quad C_i(T)=0
\]
and
\[
D_i'(t)=a_i D_i^2(t) + b_i(t) D_i(t) + \eta_i(t), \quad D_i(T)=0
\]
with coefficients given by
\begin{eqnarray*}
&&d_{1i}=-\kappa\theta, \quad d_{2i}=rq_i, \\
 &&a_i=-(1/2)\xi^2,\\
&&b_i(t)=\kappa - q_i\xi (c(t)(1-\rho^2)-A\rho),\\
&&\eta_i(t) = -(1/2)q_i(q_i-1)(A^2+ c^2(t)(1-\rho^2)).
\end{eqnarray*}
Furthermore, $D_i$ is given by
$$ D_i(t)=\sum_{j=1}^n D_{ij}(t)1_{(t_{j-1},t_j]}(t),$$
where $D_{ij}$, $j=1,\ldots,n$, are computed recursively as follows: for $j=n$,
$$ D_{in}'(t)=a_i D_{in}^2(t) + b_{i}(t_n) D_{in}(t) + \eta_{i}(t_n), \quad t\in [t_{n-1},t_n] $$
with  terminal condition $D_{in}(t_n)=0$ and, for $j=n-1,\ldots,1$,
$$ D_{ij}'(t)=a_i D_{ij}^2(t) + b_{i}(t_j) D_{ij}(t) + \eta_{i}(t_j), \quad t\in [t_{j-1},t_j] $$
with  terminal condition $D_{ij}(t_j)=D_{i,j+1}(t_j)$. The closed-form solutions of $C_{ij}(t)$ and $D_{ij}(t)$
are given by  (\ref{Ct-2}) and (\ref{Dt-2}) respectively in Appendix A. Comparing $\mathcal{Z}$ in Remark \ref{rk3.2} and  (\ref{sumofsolution}), we see that
$$ F_i(t,v)= \exp(C_i(t)+D_i(t)v)$$
and the upper bound  $ \overline{\mathcal{W}}$ and the feedback control $\bar\pi$ are given by
\begin{equation} \label{uppW}
\overline{\mathcal{W}}(t,x,v)=\sum_{i=1}^K \widetilde U_i(y^*) F_i(t,v) + xy^*
\end{equation}
and
\begin{equation} \label{feedback}
x\bar\pi(t)=\sum_{i=1}^K [A(1-q_i)+\xi\rho D_i(t)](y^*)^{q_i-1}F_i(t,v),
\end{equation}
where $y^*=y(t,x,v)$ is the unique solution of equation $\sum_{i=1}^K y^{q_i-1} F_i(t,v)=x$.

Since the PDE \eqref{BS-equation} can be solved with a closed form solution, which makes the computation of the upper bound very fast. Even if the dual utility is not in the form of (\ref{sumofUi}), but has some simple structure such as call/put option payoff function, one can still compute the upper bound efficiently by using the fast Fourier transform method. We next discuss several examples to illustrate these points.

\begin{exam}\label{power-uppper-bound-Csqrtv}(power utility).
For $U(x)=\frac{x^p}{p}$, its dual function is given by
$\widetilde U(y) = -\frac{y^q}{q}$, where $q=p/(p-1)$. Let $\gamma_t=c\sqrt{v_t}$. This is a special case of \eqref{sumofUi} with $K=1$ and $q_1=q$.
The  dual value function $\mathcal{Z}$, defined by \eqref{new-dual-val-function}, is given by \eqref{sumofsolution}. For power utility,  the upper bound  $ \overline{\mathcal{W}}$ and the feedback control $\bar\pi$ can be written out explicitly as
$$  \overline{\mathcal{W}}(t,x,v) = U(x)\exp((1-p)(C(t)+D(t)v))
\mbox{ and }\bar{\pi}(t) = (1-q)A + \xi\rho D(t),$$
where $C(t)$ and $D(t)$ are given by \eqref{Ct-2} and \eqref{Dt-2},  respectively,  with $\underline t=0$,
$\bar t=T$ and $f_1=f_2=0$.
Note that $\bar{\pi}$ is a deterministic function of time $t$.
We can then use the Monte Carlo method to generate sample paths of the wealth process to compute the lower bound, see Remark \ref{rk3.2}.  However, for power utility, there is a fast  approximation method to compute the lower bound as shown next.
By the Feynman-Kac theorem, the lower bound $\underline{W}$, defined by \eqref{Wlowerbound-C},  satisfies the following PDE:
\begin{eqnarray*}\label{power-lower-bound-2}
\frac{\partial \underline{\mathcal{W}}}{\partial t} + (r + A\bar{\pi}(t)v) x  \underline{\mathcal{W}}_x  + \kappa(\theta-v)\underline{\mathcal{W}}_v + \frac{1}{2}\bar{\pi}(t)^2x^2v \underline{\mathcal{W}}_{xx}+\, \xi \bar{\pi}(t) x v \rho  \underline{\mathcal{W}}_{xv} +\frac{1}{2}\xi^2 v \underline{\mathcal{W}}_{vv}=0
\end{eqnarray*}
with the terminal condition $\underline{\mathcal{W}}(T,x,v)=\frac{x^p}{p}$. Thanks to the power utility, the solution  of the above equation is given by
$$\underline{\mathcal{W}}(t,x,v) = \frac{x^p}{p}\exp\left[\bar{C}(t)+\bar{D}(t)v\right],$$
where $\bar{C}(t)$ and $\bar{D}(t)$ satisfy the following  ODEs
\begin{equation*}\label{ODE-Ct-2}
\bar{C}'(t)=-\kappa\theta \bar{D}(t) - rp,\quad \bar{C}(T)=0
\end{equation*}
and
\begin{equation*}\label{ODE-Dt-2}
\bar{D}'(t)=-\frac{1}{2}\xi^2\bar{D}^2(t) -(\xi \bar{\pi}(t) \rho p -\kappa) \bar{D}(t) - Ap\bar{\pi}(t) - \frac{1}{2}\bar{\pi}(t)^2p(p-1),\quad
\bar{D}(T)=0.
\end{equation*}
 Even though $\bar D$ satisfies a Riccati equation, there is no closed form solution for $\bar D$ as $\bar \pi$ is a continuous function, not a constant. We can nevertheless approximate $\bar\pi$ with a piecewise constant function and then get a closed-form approximate solution to $\bar D$ with a recursive method. Specifically, we may divide interval $[0,T]$ by grid points $0=\tilde t_0<\tilde t_1<\ldots \tilde t_m=T$ and approximate $\bar\pi$ by a piecewise constant function
$$ \tilde\pi(t)=\sum_{k=1}^m \bar\pi(\tilde t_k)1_{(t_{k-1},t_k]}(t).$$
$\tilde\pi$ can be made arbitrarily close to $\bar\pi$. If we replace $\bar\pi$ by $\tilde\pi$ in the Riccati equation for $\bar D$, the solution of the resulting equation can be written as
$$ \tilde D(t) = \sum_{k=1}^m \tilde D_k(t) 1_{(t_{k-1},t_k]}(t),$$
where $\tilde D_k$, $k=m,\ldots,1$, satisfy Riccati equations \eqref{Dt-2} with constant coefficients on intervals $(\tilde t_{k-1},\tilde t_k]$  and  can be computed recursively in a closed form
 with terminal conditions $\tilde D_k(\tilde t_k)=\tilde D_{k+1}(\tilde t_k)$. The function $\tilde D$ is a good approximation of $\bar D$.
\end{exam}

\begin{exam}\label{non-HARA-uppper-bound-Csqrtv} (non-HARA utility).
Assume
\begin{equation*}\label{non-HARA-utility}
U(x)={1\over 3}H(x)^{-3}+ H(x)^{-1} + xH(x),
\end{equation*}
for $x>0$, where
$$ H(x)=\left({2\over -1+\sqrt{1+4x}}\right)^{1/2}.$$
It can be easily checked that
$U$ is continuously differentiable, strictly increasing and strictly concave, satisfying  $U(0)=0$,  $U(\infty)=\infty$, $U'(0)=\infty$ and $U'(\infty)=0$.   Furthermore, the relative risk aversion coefficient of $U$ is given by
$$ R(x)=-{xU''(x)\over U'(x)}= {1\over 4}\left(1+{1\over \sqrt{1+4x}}\right),$$
which shows that $U$ is not a HARA utility  and represents an investor who will increase the percentage of wealth invested in the risky asset as wealth increases, see  \cite{Bian2015}  for more details.
The dual function of $U$ is given by
$$\widetilde U(y)={1\over 3}y^{-3}+y^{-1}.$$
Let $\gamma_t=c\sqrt{v_t}$. This is a special case of \eqref{sumofUi} with $K=2$ and $q_1=-3,\, q_2=-1$. The dual value function $\mathcal{Z}$ is given by \eqref{sumofsolution},
the upper bound $\overline{\mathcal{W}}$ by (\ref{uppW}) and the feedback control
$\bar\pi$ by (\ref{feedback}), in the case here, $y^*$ can be computed explicitly as
$$ y^* = \left(\frac{F_2(t,v) + \sqrt{F_2(t,v)^2+4x F_1(t,v)}}{2x}\right)^{\frac{1}{2}},$$
where
$C_i(t),\, D_i(t)$ are given by \eqref{Ct-2} and \eqref{Dt-2} respectively with $\underline t=0$,
$\bar t=T$ and $f_1=f_2=0$, see Appendix B.

Note that, unlike the case for power utility, there is no closed form formula for the lower bound $\underline{W}$. One has to use the Monte Carlo method to generate sample paths of the wealth process in order to find its value. We can nevertheless find a reasonable lower bound at more expensive computational cost.
\end{exam}

\begin{exam} \label{Yaari-upper-bound-Csqrtv} (Yarri utility).
Assume
\begin{equation}\label{threshold-utility}
U(x)=x \wedge L,
\end{equation}
where $L$ is a positive constant. $U$ is a continuous, increasing and concave function, but not differentiable at $x=L$ and not strictly concave. Also note that $U\rq{}(0)=1$, so Inada\rq{}s condition is not satisfied.  This utility is called Yarri utility and is used in behavioural finance.
 The dual function  is given by
\begin{equation*}\label{dual-function-threshold}
\widetilde{U}(y)=L(1-y)^+.
\end{equation*}
For the dual process \eqref{Y-process} with $\gamma_t=c\sqrt{v_t}$, where $c>0$ is an arbitrarily fixed constant, we evaluate the dual value function
\[
\mathcal{Z}(t,y,v) = E_{t,y,v}[\widetilde{U}(Y_T)]=E_{t,y,v}[L(1-Y_{T})^+].
\]
This is a European put option pricing problem with the Heston model.
Let $Z_{T}=\ln Y_{T}$ and $z=\ln y$. Then
\begin{equation}\label{dual-val-func-Yaari}
\mathcal{Z}(t,y,v) = \widetilde{\mathcal{Z}}(t,z,v)=E_{t,z,v}[L(1-e^{Z_{T}})^+],
\end{equation}
with terminal condition $\widetilde{\mathcal{Z}}(T,z,v)= L(1-e^{z})^+$.
Although the conditional probability density function of $Z_{T}$ is unknown, its conditional characteristic function (namely, the Fourier transform of the density function) can be derived. Therefore, analogous to the well-known Heston method in \cite{Heston1993}, function $\mathcal{Z}$ in \eqref{dual-val-func-Yaari} can be written as an integral formula, which can be evaluated by numerical integration rules based on the Fast Fourier Transform (FFT). \cite{Fang2008} develop Fourier-cosine expansion in the context of numerical integration as a more efficient alternative for the methods based on the FFT, which is named as COS method. For the convenience of the readers, we show the main ideas of COS method in Appendix B.

 We now give some details.
Define the conditional characteristic function of $Z_T$ by
\begin{equation*}
\phi(t,z,v;\omega) = E\left[e^{i\omega Z_T}|Z_t=z,\,v_t=v\right].
\end{equation*}
By the Feynman-Kac theorem, $\phi$ satisfies the following PDE
\begin{eqnarray}\label{Yaari-f-PDE}
&& \frac{\partial \phi}{\partial t} - \left\{ r+\frac{1}{2}v[A^2+c^2(1-\rho^2)] \right\}\phi_z + \kappa(\theta-v)\phi_v + \frac{1}{2}v[A^2+c^2(1-\rho^2)]\phi_{zz}\nonumber\\
&&\quad +\,v\xi[c(1-\rho^2)-A\rho]\phi_{zv} + \frac{1}{2}\xi^2 v \phi_{vv}=0.
\end{eqnarray}
Assume that $\phi$ takes the following form
\begin{equation}\label{Yaari-f}
\phi(t,z,v;\omega) = \exp(C(t;\omega) + D(t;\omega)v + i\omega z).
\end{equation}
with $C(T;\omega)=0$ and $D(T;\omega)=0$.
Inserting \eqref{Yaari-f} into \eqref{Yaari-f-PDE} gives that $C$ and $D$ satisfy the Riccati equations in Appendix A with coefficients
$d_1=-\kappa\theta,\, d_2=r i \omega$ and
$$ a= -\frac{1}{2}\xi^2, \quad
b= - \{ \xi[c(1-\rho^2)-A\rho]i\omega-\kappa \}, \quad
\eta=\frac{1}{2}[A^2 + c^2(1-\rho^2)](\omega^2 + i\omega).
$$
 The closed form solutions $C$ and $D$  are given by \eqref{Ct-2} and \eqref{Dt-2}, respectively, with $\underline{t}=0$, $\bar{t}=T$, $f_1=0$, $f_2=0$.
Define $\varphi(t,v;\omega)=e^{-i\omega z}\phi(t,z,v;\omega)$. This is the conditional characteristic function of $Z_T-Z_t= \ln Y_T-\ln Y_{t}$.

Following  \cite{Fang2008}, we can easily find that the upper bound is given by
\begin{equation}\label{Yaari-upper-bound-2}
 \overline{\mathcal{W}}(t,x,v) = \mathcal{Z}(t,y^*,v) + xy^*,
\end{equation}
where
\begin{equation}\label{Yaari-dual-val-function}
\mathcal{Z}(t,y^*,v):=\widetilde{\mathcal{Z}}(t,z^*,v) \approx
\mathop{{\sum}'} \limits_{k=0}^{N-1} \mathrm{Re} \left\{ \varphi\left(t,v;\frac{k\pi}{\zeta_{2}-\zeta_{1}}\right)e^{ik\pi\frac{z^*-\zeta_{1}}{\zeta_{2}-\zeta_{1}}} \right\}\widetilde{\mathcal{Z}}_k,
\end{equation}
and $\widetilde{\mathcal{Z}}_k,\, y^*$ and other constants are given in Appendix B.
The feedback control for computing the lower bound is given by \eqref{pi-expression}.
\end{exam}

\section{Numerical tests}\label{sec:numerical-test}

In the following numerical examples we use the dual-control Monte-Carlo method to solve the optimal control problem \eqref{optimization-W} with power, non-HARA and Yarri utilities. We compute the upper bounds using the closed form formulas for power and non-HARA utilities and the Fourier-cosine method for Yarri utility when $\gamma=c\sqrt{v}$ and everything else (the lower bounds for all $\gamma$ and the upper bounds for $\gamma=c$ and $\gamma=cv$) using the Monte-Carlo method with path number 100,000 and time steps 100 for discretizing SDEs with the Euler method, see Remarks \ref{rk3.2} and \ref{rk3.3}.


\subsection{Power utility}

\begin{exam}\label{exam:power}
This example is aimed to apply the lower and upper bound method to the power utility when $v_t$ following mean-reversion square-root process. The following parameters
\begin{equation}\label{parameters}
r = 0.05, ~\rho = -0.5, ~\kappa = 10, ~\theta = 0.05, ~\xi = 0.5, ~A = 0.5, ~x_0=1, ~v_0 = 0.5, ~T = 1,
\end{equation}
are taken from \cite{Zhang2016}.
The comparisons are carried out for the cases of sampling control $c$ for $1,5,20,80$ times uniformly distributed in $[-0.5,0.5]$ for both the lower and upper bounds. The benchmark value is the primal value explicitly given by \cite{Kraft2005}. The parameter $p$ in utility function equals $1/2$, and other parameters follow values in \eqref{parameters}. The numerical results are listed in Table~\ref{power-table-1}.

\begin{table}[!htbp]
\caption{Lower bound (LB) and upper bound (UB) for power utility (Example~\ref{exam:power}). The benchmark from Example \ref{power-primal-analytical} equals 2.074842.  } 
\centering 
\begin{tabular}{l c c c c c r}\label{power-table-1}\\
\hline
\multicolumn{7}{c}{$\gamma_t = c$}\\
Num $c$    &LB	            &UB	            &diff	              &rel-diff (\%)	        &LB time (secs) &UB time (secs)\\
\hline
1	       &2.074824	    &2.074894	&7.01e$-5$ & 3.38\text{e}$-3$	&2.62\text{e}$+4$	   &$2.90\text{e}$$+0$\\
5	       &2.074824	    &2.074894	&7.01\text{e}$-5$ & 3.38\text{e}$-3$	&2.63\text{e}$+4$	   &$1.46\text{e}$$+1$\\
20	       &2.074824	    &2.074894	&7.01\text{e}$-5$ & 3.38\text{e}$-3$	&2.65\text{e}$+4$	   &$5.59\text{e}$$+1$\\
80	       &2.074824	    &2.074893	&6.92\text{e}$-5$ & 3.34\text{e}$-3$	&2.76\text{e}$+4$	   &2.24\text{e}$+2$\\
\hline
\multicolumn{7}{c}{$\gamma_t = c\sqrt{v_t}$}\\
Num $c$    &LB	            &UB	            &diff	              &rel-diff (\%)	        &LB time (secs) &UB time (secs)\\
\hline
1	       &2.074823	    &2.074845	&2.12\text{e}$-5$ &$1.02\text{e}$$-3$	&$4.65\text{e}$$+0$    &$9.87\text{e}$$-4$\\
5	       &2.074823	    &2.074845	&$2.12\text{e}$$-5$ &$1.02\text{e}$$-3$	&$2.33\text{e}$$+1$    &$3.30\text{e}$$-3$\\
20	       &2.074823	    &2.074845	&$2.12\text{e}$$-5$ &$1.02\text{e}$$-3$	&$9.57\text{e}$$+1$    &$5.68\text{e}$$-3$\\
80	       &2.074823	    &2.074842	&$1.88\text{e}$$-5$ &$9.06\text{e}$$-4$	&$3.91\text{e}$$+2$    &$8.18\text{e}$$-3$\\
\hline
\multicolumn{7}{c}{$\gamma_t = c v_t$}\\
Num $c$    &LB	            &UB	            &diff	              &rel-diff (\%)	        &LB time (secs) &UB time (secs)\\
\hline
1	       &2.074824	    &2.074894	&$7.01\text{e}$$-5$ &$3.38\text{e}$$-3$	&$2.64\text{e}$$+4$	   &$2.83\text{e}$$+0$\\
5	       &2.074824	    &2.074894	&$6.97\text{e}$$-5$ &$3.36\text{e}$$-3$	&$2.65\text{e}$$+4$	   &$1.38\text{e}$$+1$\\
20	       &2.074824	    &2.074894	&$6.97\text{e}$$-5$ &$3.36\text{e}$$-3$	&$2.67\text{e}$$+4$	   &$5.37\text{e}$$+1$\\
80	       &2.074824	    &2.074893	&$6.88\text{e}$$-5$ &$3.31\text{e}$$-3$	&$2.79\text{e}$$+4$	   &$2.15\text{e}$$+2$\\
\hline
\end{tabular}
\end{table}
\end{exam}

\begin{exam}\label{exam:power-robust-test}
From Example \ref{exam:power}, we see $\gamma_t = c\sqrt{v_t}$ outperforms other choices of $\gamma_t$. In this example we further test the robustness of the dual control Monte-Carlo methods for $\gamma_t = c\sqrt{v_t}$. The comparisons are carried out for the cases of sampling control $c$ for $1,5,20,80$ times uniformly distributed in $[-0.5,0.5]$ both for the lower and upper bounds. In Table \ref{power-table-2}, we give the mean and standard deviation of the absolute and relative difference between the lower and upper bounds of power utility with randomly sampled parameters-sets: 10 samples of $r$ from the uniform distribution on interval $[0.01,0.08]$, $\rho$ on $[-1,1]$, $\kappa$ on $[1,10]$, $\theta$ on $[0.01,1]$, $\xi$ on $[0.1,1]$, $A$ on $[0.1,1.5]$, $x_0=1$, $v_0=0.5$, and $T=1$. It is clear that the gap between the tight lower and upper bounds is very small, especially when the dual control $c$ is used. This shows that the algorithm is reliable and accurate. The numerical results are listed in Table~\ref{power-table-2}.

\begin{table}[!htbp]
\caption{Mean and std of the absolute and relative difference between the lower and upper bounds for power utility (Example~\ref{exam:power-robust-test}) with many randomly sampled parameters-sets.} 
\centering 
\begin{tabular}{l c c c c r}\label{power-table-2}\\
\hline
Num $c$  &mean diff               &std diff               &mean rel-diff (\%)      &std rel-diff (\%)       &mean time (secs) \\
1	   &$2.2695\text{e}$$-3$	&$2.7658\text{e}$$-3$	&$9.8159\text{e}$$-2$	&$1.1543\text{e}$$-1$	&$4.29\text{e}$$+1$\\
5	   &$2.2660\text{e}$$-3$	&$2.7632\text{e}$$-3$	&$9.8010\text{e}$$-2$	&$1.1532\text{e}$$-1$	&$2.16\text{e}$$+2$\\
20	   &$2.0003\text{e}$$-3$	&$2.3089\text{e}$$-3$	&$8.6739\text{e}$$-2$	&$9.6877\text{e}$$-2$	&$8.72\text{e}$$+2$\\
80	   &$1.8253\text{e}$$-3$	&$1.9986\text{e}$$-3$	&$7.9391\text{e}$$-2$	&$8.4384\text{e}$$-2$	&$3.48\text{e}$$+3$\\
\hline
\end{tabular}
\end{table}
\end{exam}

\begin{exam}\label{exam:power-piecewise-constant}
 This example compares performances of $\gamma_t = c(t)\sqrt{v_t}$ with $c(t)$ being a constant ($c(t)=c$, the number of pieces $n_1=1$) and being a two-piecewise constant function
($c(t)=c_1 1_{[0,T/2]}(t) + c_2 1_{(T/2,T]}(t)$,  the number of pieces $n_1=2$).
Since $U$ is a power utility, we also replace the feasible control $\bar\pi$ for the lower bound by a piecewise constant control ($\tilde \pi(t)=\sum_{k=1}^{n_2} \bar\pi(\tilde t_k)1_{(\tilde t_{k-1}, \tilde t_k]}(t)$ with $\tilde t_k=k(T/n_2)$ and $n_2=100$) to expedite the computation of the lower bound.  Table~\ref{power-table-3} lists the numerical results. It is clear that lower and upper bounds are very tight, even for $n_1=1$ and one sample of constant $c$ which is 0 in this case. The lower bound is the same as the optimal value. The upper bound can be improved as the number of samples for $c$ is increased. We make the number of samples for each  $c_i$, $i=1,2$, the same as that of $c$ for $n_1=1$ to ensure piecewise functions with $n_1=2$ include all functions with $n_1=1$, so the performance should be better. Our numerical results confirm this is indeed the case, even though the rate of improvement is small, possibly because the bounds are already very tight. This implies we can reduce the gap of the bounds by increasing the number $n_1$ at cost of exponentially  increased computation. One needs to strike a balance of accuracy and cost. Since $n_1=1$ gives good estimation of the bounds, we use it from now on for other utilities too, including non-HARA and Yarri utilities.

\begin{table}[!htbp]
\caption{Lower bound (LB) and upper bound (UB) for power utility with piecewise constant control $c$ (Example~\ref{exam:power-piecewise-constant}). The benchmark from Example \ref{power-primal-analytical} equals 2.074842060.  } 
\centering 
\begin{tabular}{l c c c c c r}\label{power-table-3}\\
\hline
\multicolumn{7}{c}{$n_1=1$}\\
Num $c$    &LB	            &UB	            &diff	              &rel-diff (\%)	        &LB (secs) &UB  (secs)\\
\hline
1	          &2.074842060 	    &2.074844628	&$2.5680\text{e}$$-6$ &$1.2377\text{e}$$-4$	    &$1.75\text{e}$$-4$	&$1.74\text{e}$$-4$\\
60	          &2.074842060 	    &2.074844628	&$2.5680\text{e}$$-6$ &$1.2377\text{e}$$-4$	    &$1.00\text{e}$$-3$	&$8.72\text{e}$$-4$\\
600	          &2.074842060 	    &2.074842126	&$6.5705\text{e}$$-8$ &$3.1667\text{e}$$-6$	    &$1.30\text{e}$$-2$	&$1.80\text{e}$$-3$\\
6000	      &2.074842060 	    &2.074842125	&$6.4925\text{e}$$-8$ &$3.1292\text{e}$$-6$	    &$1.11\text{e}$$-1$	&$4.02\text{e}$$-3$\\
\hline
\multicolumn{7}{c}{$n_1=2$}\\
Num $c$    &LB	            &UB	            &diff	              &rel-diff (\%)	        &LB (secs) &UB (secs)\\
\hline
1	          &2.074842060 	    &2.074844628	&$2.5680\text{e}$$-6$ &$1.2377\text{e}$$-4$  	&$2.87\text{e}$$-4$	&$1.72\text{e}$$-4$\\
$60^2$ 	      &2.074842060 	    &2.074842469	&$4.0910\text{e}$$-7$ &$1.9717\text{e}$$-5$	    &$9.10\text{e}$$-2$	&$2.05\text{e}$$-3$\\
$600^2$  &2.074842060 	    &2.074842119	&$5.9149\text{e}$$-8$ &$2.8508\text{e}$$-6$	    &$7.75\text{e}$$+0$	&$1.24\text{e}$$-1$\\
$6000^2$  &2.074842060 	    &2.074842104	&$4.4493\text{e}$$-8$ &$2.1444\text{e}$$-6$	    &$7.74\text{e}$$+2$	&$2.84\text{e}$$-1$\\
\hline
\end{tabular}
\end{table}
\end{exam}

\subsection{Non-HARA utility}

\begin{exam}\label{exam:non-HARA-vt-constant}
This example is aimed to check the correctness of the lower and upper bounds when process $v_t$ always constant through the time, in which case there is explicit solution to the primal value function.  Let $v_0 = \theta,\, \xi = 0$, and the other parameters be the same as \eqref{parameters}.
Denote $\bar{W}_1 = \exp[(3 r + 6 A^2\theta)(T-t)]$ and $\bar{W}_2 = \exp[(r+A^2\theta)(T-t)]$. Then the primal value function has the following explicit form (see \cite{Bian2015}):
\begin{equation*}\label{nonHARA-vt-constant-benchmark}
\mathcal{W}(t,x) = \frac{2}{3}\left(\frac{\bar{W}_2}{y^*} + 2 x y^*\right),
\end{equation*}
with
\begin{equation*}
y^* = \sqrt{\frac{1}{2x}\left(\bar{W}_2 + \sqrt{\bar{W}^2_2 + 4 x \bar{W}_1}\right)}.
\end{equation*}
The lower and upper bounds are computed by the Monte-Carlo method with path number $100,000$ and time steps $100$.
The numerical results are  listed in Table~\ref{non-HARA-table-1}, in which the numerics show that the benchmark is between the lower and upper bound, and the difference between these is proportional to $10^{-4}$ and relative difference $10^{-5}$. Therefore, the lower and upper bound methods are reliable and accurate.

\begin{table}[!htbp]
\caption{Lower bound (LB) and upper bound (UB) for Example~\ref{exam:non-HARA-vt-constant} (non HARA utility).} 
\centering 
\begin{tabular}{l c c c r }\label{non-HARA-table-1}\\
\hline
Benchmark   &LB	        &UB	        &diff	    &rel-diff (\%)	\\
\hline
2.307810 	&2.307691 	&2.307843   &$1.52\text{e}$$-4$	&$6.60\text{e}$$-3$\\
\hline
\end{tabular}
\end{table}
\end{exam}

\begin{exam}\label{exam:non-HARA-vt-not-constant}
This example is aimed to apply the lower and upper bound methods to the non HARA utility when $v_t$ following mean-reversion square-root process. The comparisons are carried out for the cases of sampling control $c$ for $20$ times uniformly distributed in $[-0.5,0.5]$ both for the lower and upper bounds. The other parameters values are the same as in \eqref{parameters}.  The numerical results in Table~\ref{non-HARA-table-2} show that the choice $\gamma=c\sqrt{v_t}$ outperforms the others.

\begin{table}[!htbp]
\caption{Lower bound (LB) and upper bound (UB) for non HARA utility (Example~\ref{exam:non-HARA-vt-not-constant}).  } 
\centering 
\begin{tabular}{l c c c c c r}\label{non-HARA-table-2}\\
\hline
$\gamma$    &LB	            &UB	            &diff	                &rel-diff(\%)	        &LB time (secs) &UB time (secs)\\
\hline
$c$	       &2.327407	    &2.327834	&$4.27\text{e}$$-4$	&$1.83\text{e}$$-2$	&$2.43\text{e}$$+4$	   &$5.37\text{e}$$+1$\\
$c\sqrt{v_t}$	       &2.327573	    &2.327858	&$2.84\text{e}$$-4$	&$1.22\text{e}$$-2$	&$1.36\text{e}$$+2$    &$4.35\text{e}$$-3$\\
$c v_t$      &2.327411	    &2.327833	&$4.21\text{e}$$-4$	&$1.81\text{e}$$-2$	&$2.41\text{e}$$+4$	   &$5.58\text{e}$$+1$\\
\hline
\end{tabular}
\end{table}

Using the optimal control $c^*$ for computing the tight lower bound for $\gamma_t = c \sqrt{v_t}$ in Table~\ref{non-HARA-table-2}, we draw the 3D figures for the optimal strategies $\bar{\pi}(t,x,v)$ and the distribution of the terminal wealth (see Figure~\ref{non-HARA-pi-Csqrtv}).
\end{exam}

\begin{figure}[!htbp]
\centering
\subfigure
{
 \begin{minipage}{1.85in}
  \centering
  \includegraphics[width=1.85in,height=1.85in]{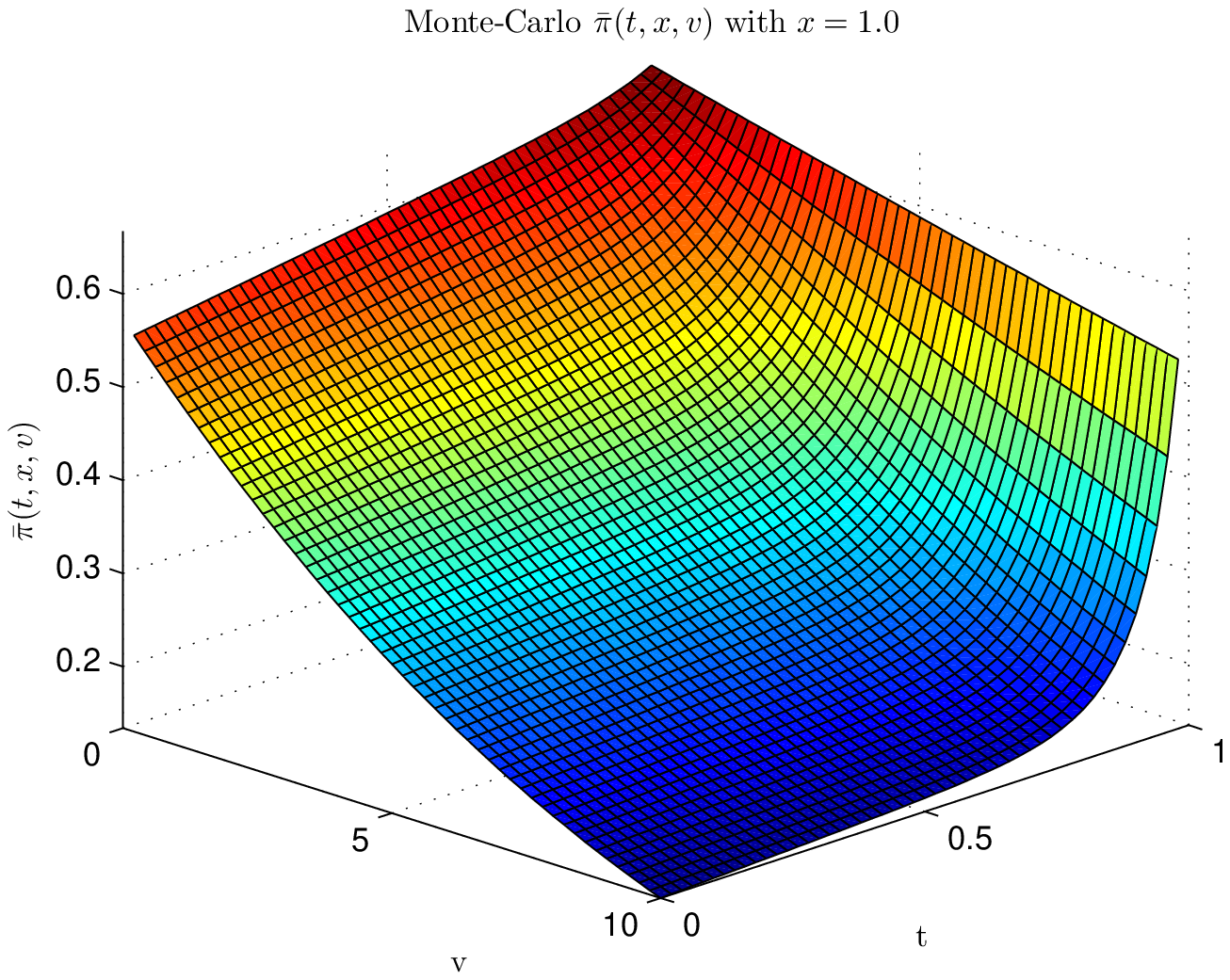}
 \end{minipage}
}
\subfigure
{
 \begin{minipage}{1.85in}
 \centering
 \includegraphics[width=1.85in,height=1.85in]{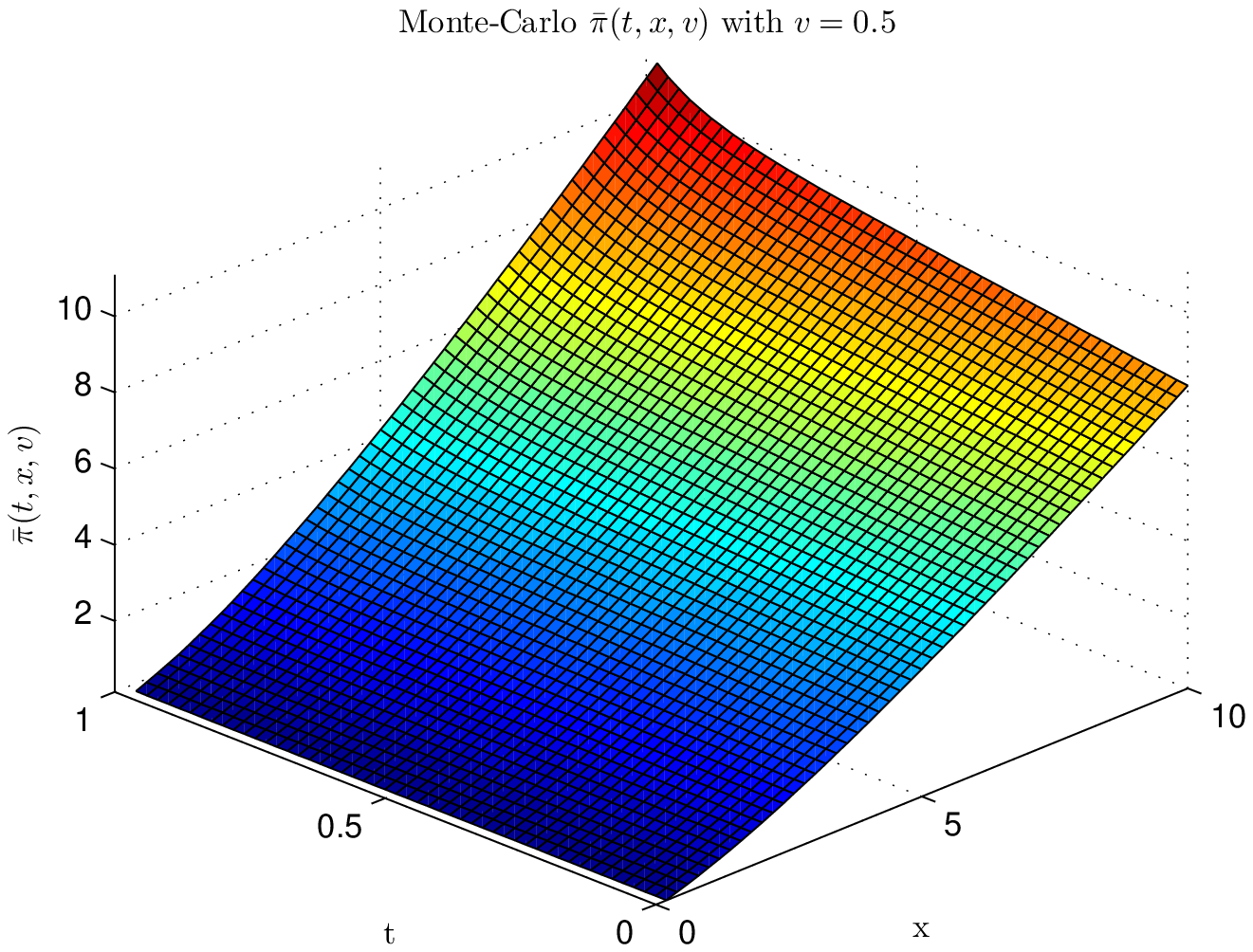}
\end{minipage}
}
\subfigure
{
 \begin{minipage}{1.85in}
 \centering
 \includegraphics[width=1.85in,height=1.85in]{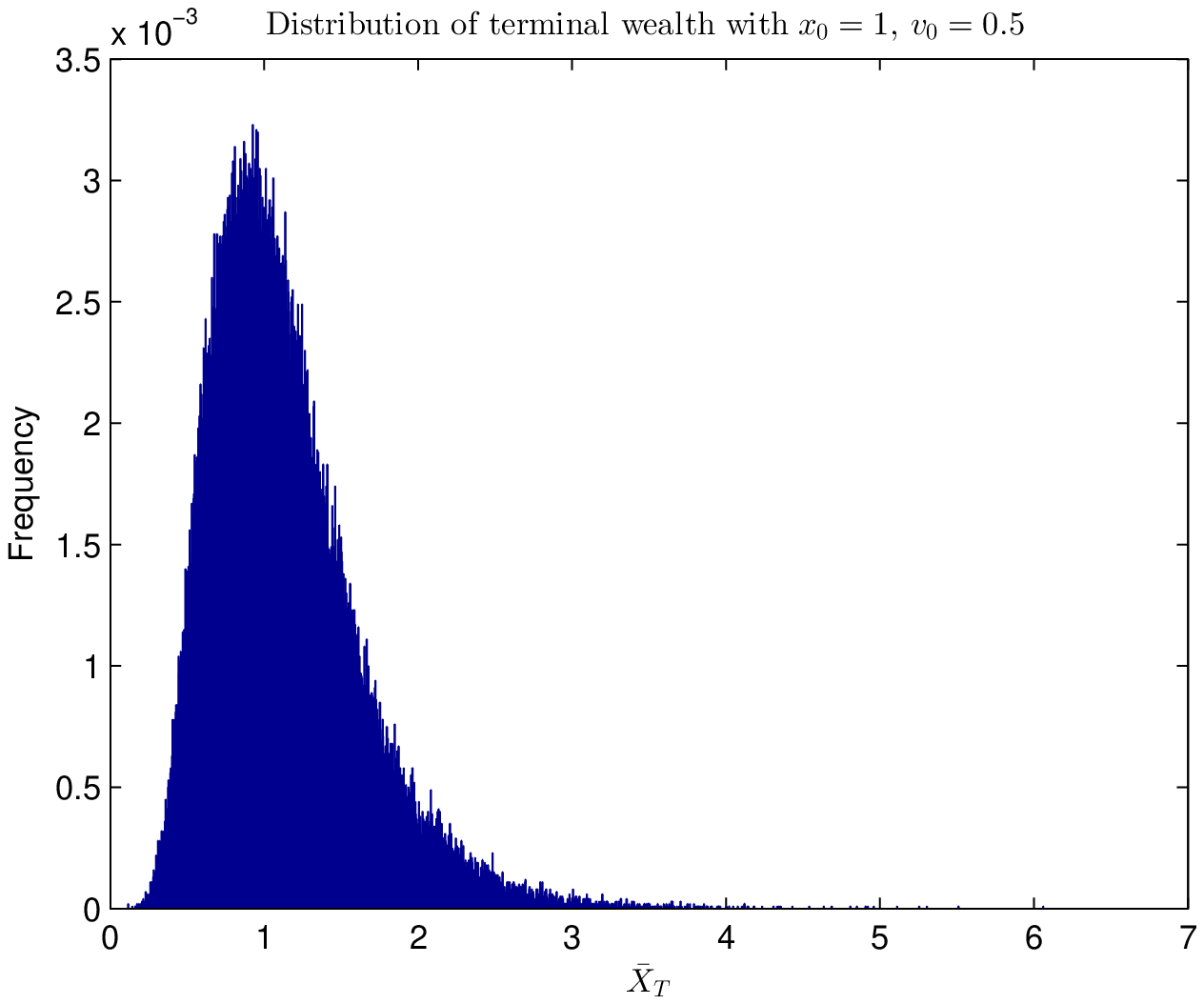}
\end{minipage}
}
\caption{3D and 2D figures for non HARA utility (Example \ref{exam:non-HARA-vt-not-constant}). The left figure is the optimal strategies $\bar{\pi}(t,x,v)$ with initial wealth $x_0=1$. The middle figure is the optimal strategies $\bar{\pi}(t,x,v)$ with initial variance $v_0 = 0.5$. And the right figure is the distribution of terminal wealth process.} %
\label{non-HARA-pi-Csqrtv}
\end{figure}

\begin{exam}\label{exam:non-HARA-robust-test}
In this example, we further examine the robustness of the lower and upper bound methods with $\gamma_t = c\sqrt{v_t}$. The comparisons are carried out for the cases of sampling control $c$ for $1,5,10,20$ times uniformly distributed in $[-0.5,0.5]$ both for the lower and upper bounds. In Table \ref{non-HARA-table-3}, we give the mean and standard deviation of the absolute and relative difference between the lower and upper bounds for non-HARA utility with randomly sampled parameters-sets: $10$ samples of $r$ from the uniform distribution on interval $[0.01,0.08]$, $\rho$ on $[-1,1]$, $\kappa$ on $[1,10]$, $\theta$ on $[0.01,1]$, $\xi$ on $[0.1,1]$, $A$ on $[0.1,1.5]$, $x_0=1$, $v_0=0.5$, and $T=1$. It is clear that the difference (and relative difference) between the tight lower and upper bounds is very small, especially when the dual control $c$ is used. This shows that the algorithm is reliable and accurate.

\begin{table}[!htbp]
\caption{Mean and std of the absolute and relative difference between the lower and upper bounds for non-HARA utility (Example~\ref{exam:non-HARA-robust-test}) with many randomly sampled parameters-sets.} 
\centering 
\begin{tabular}{l c c c c r}\label{non-HARA-table-3}\\
\hline
Num $c$  &mean diff               &std diff               &mean rel-diff (\%)      &std rel-diff (\%)       &mean time (secs) \\
1	   &$1.1434\text{e}$$-2$	&$1.7040\text{e}$$-2$	&$4.2995\text{e}$$-1$	&$6.3311\text{e}$$-1$	&$1.33\text{e}$$+2$\\
5	   &$9.3788\text{e}$$-3$	&$1.2686\text{e}$$-2$	&$3.5362\text{e}$$-1$	&$4.6995\text{e}$$-1$	&$6.64\text{e}$$+2$\\
10	   &$8.7469\text{e}$$-3$	&$1.2598\text{e}$$-2$	&$3.2842\text{e}$$-1$	&$4.6679\text{e}$$-1$	&$1.33\text{e}$$+3$\\
20	   &$8.5262\text{e}$$-3$	&$1.2124\text{e}$$-2$	&$3.1998\text{e}$$-1$	&$4.4861\text{e}$$-1$	&$2.66\text{e}$$+3$\\
\hline
\end{tabular}
\end{table}
\end{exam}

\subsection{Yarri utility}

\begin{exam}\label{exam:Yarri-vt-constant}
This example is aimed to check the lower and upper bound methods when process $v_t$ always constant through the time. Let $v_0 = \theta,\, \xi = 0$, and the other parameters be the same as \eqref{parameters}.
Then the analytical solution to the primal value is given by
\begin{equation*}\label{Yarri-vt-constant-benchmark}
\mathcal{W}(t,x)=
\left\{
\begin{array}{ll}
L \Phi\{ \Phi^{-1}[\frac{x}{L}e^{r(T-t)}] + A\sqrt{\theta (T-t)}\}, &0\leq x < L e^{-r(T-t)},\\
L, &x\geq L e^{-r(T-t)}.
\end{array}\right.
\end{equation*}
The upper bound is computed by the Monte-Carlo methods with path number $10,000$ and time steps $100$, and  the lower bound with path number $100,000$ and time steps $100$. The threshold $L$ is taken as $L=2$. The numerical results are listed in Table~\ref{Yarri-table-1}, which confirm that the lower and upper bound methods are reliable and accurate.

\begin{table}[h]
\caption{Lower bound (LB) and upper bound (UB) for Yarri utility (Example~\ref{exam:Yarri-vt-constant}).} 
\centering 
\begin{tabular}{l c c c r }\label{Yarri-table-1}\\
\hline
Benchmark   &LB	        &UB	        &diff	    &rel-diff (\%)	\\
\hline
1.139790	&1.136091 	&1.139889   &$3.80\text{e}$$-3$	&$3.34\text{e}$$-1$\\
\hline
\end{tabular}
\end{table}
\end{exam}

\begin{exam}\label{exam:Yarri-vt-not-constant}
This example is aimed to apply the lower and upper bound methods to the Yarri utility when $v_t$ following mean-reversion square-root process.
The comparisons are carried out for sampling control $c$ for $20$ times uniformly distributed in $[-0.5,0.5]$ both for the lower and upper bounds. The values of other parameters are the same as Example \ref{exam:Yarri-vt-constant}.
 For the Fourier-cosine methods, we set the truncation number as $N=64$.  The numerical results are listed in Table~\ref{Yarri-table-2}. It is shown that the choice $\gamma_t = c\sqrt{v_t}$ outperforms the others.

\begin{table}[h]
\caption{Lower bound (LB) and upper bound (UB) for Yarri utility (Example~\ref{exam:Yarri-vt-not-constant}).} 
\centering 
\begin{tabular}{l c c c c c r}\label{Yarri-table-2}\\
\hline
   $\gamma$ &LB	            &UB	            &diff	                &rel-diff (\%)	        &LB time (secs) &UB time (secs)\\
\hline
$c$       &1.113889	    &1.174928	&$6.10\text{e}$$-2$	&$5.48\text{e}$$+0$	&$1.08\text{e}$$+4$ &$6.08\text{e}$$+0$\\
$c\sqrt{v_t}$       &1.172057	    &1.173366	&$1.31\text{e}$$-3$	&$1.12\text{e}$$-1$	&$2.37\text{e}$$+3$ &$2.48\text{e}$$-1$\\
$c v_t$	       &1.137594	    &1.174928	&$3.73\text{e}$$-2$	&$3.28\text{e}$$+0$	&$1.09\text{e}$$+4$	&$5.69\text{e}$$+0$\\
\hline
\end{tabular}
\end{table}

The 3D figures are drawn for the optimal strategy $\bar{\pi}(t,x,v)$. Also it is plotted that the distribution of the terminal wealth (See Figure~\ref{Yarri-pi-Csqrtv}).
\end{exam}

\begin{figure}[!htbp]
\centering
\subfigure
{
 \begin{minipage}{1.85in}
  \centering
  \includegraphics[width=1.85in,height=1.85in]{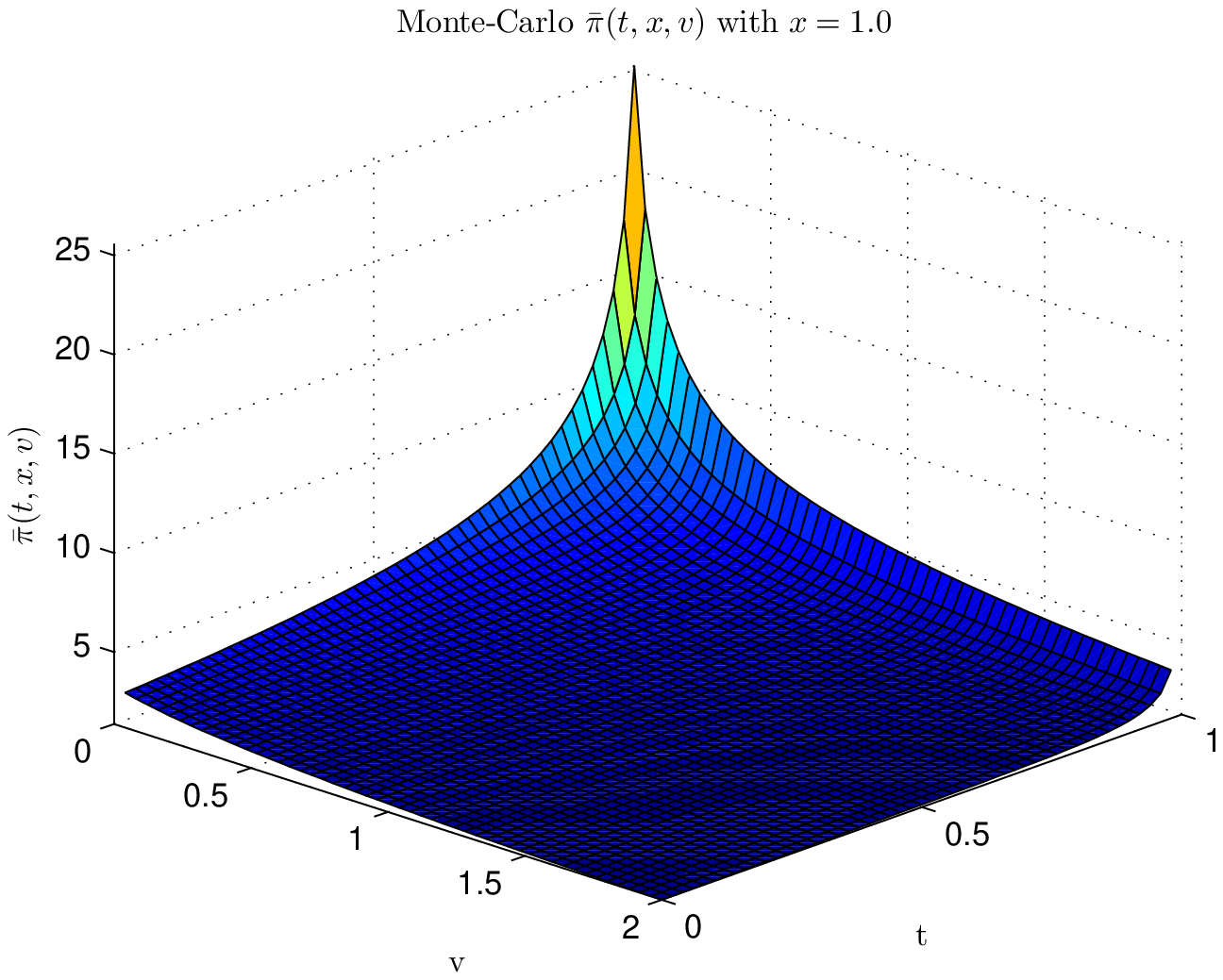}
 \end{minipage}
}
\subfigure
{
 \begin{minipage}{1.85in}
 \centering
 \includegraphics[width=1.85in,height=1.85in]{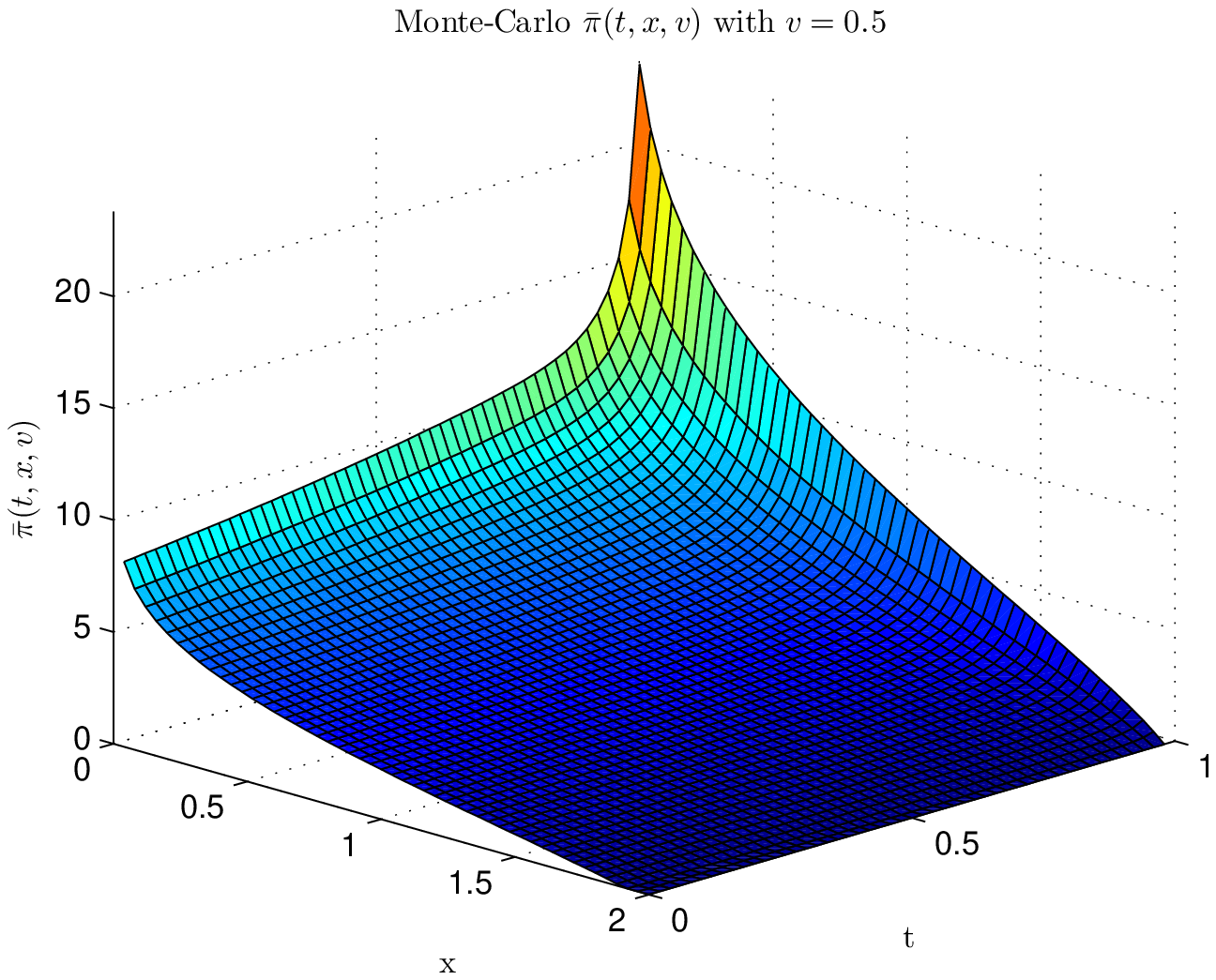}
\end{minipage}
}
\subfigure
{
 \begin{minipage}{1.85in}
 \centering
 \includegraphics[width=1.85in,height=1.85in]{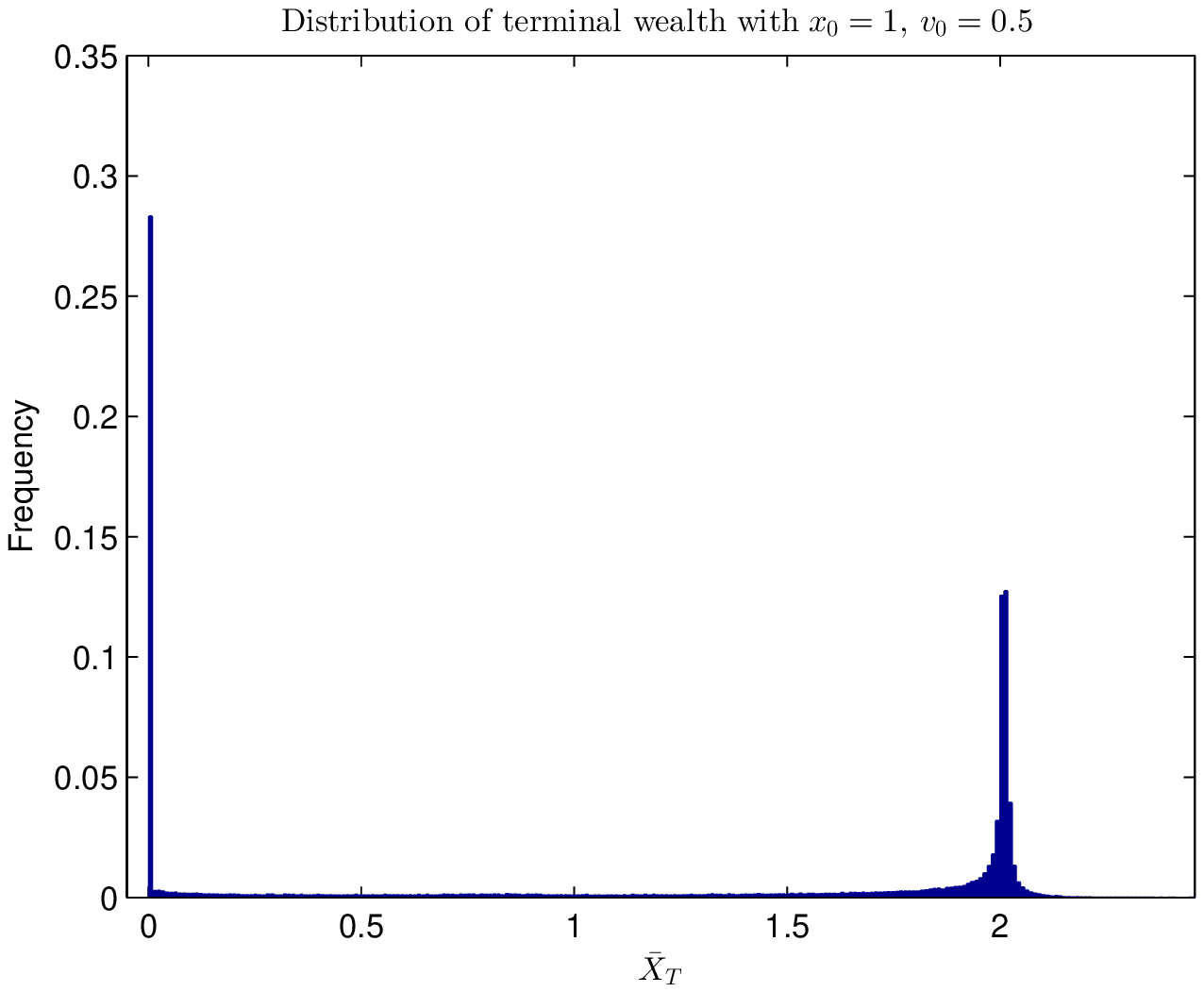}
\end{minipage}
}
\caption{3D and 2D figures for Yarri utility (Example \ref{exam:Yarri-vt-not-constant}). The left figure is the optimal strategies $\bar{\pi}(t,x,v)$ with initial wealth $x_0=1$. The middle figure is the optimal strategies $\bar{\pi}(t,x,v)$ with initial variance $v_0 = 0.5$. The right figure is the distribution of terminal wealth process.} %
\label{Yarri-pi-Csqrtv}
\end{figure}

\begin{exam}\label{exam:Yarri-robust-test}
In this example, we further test the robustness of the lower and upper bound methods for $\gamma_t = c\sqrt{v_t}$. The comparisons are carried out for the cases of sampling control $c$ for $1,5,10,20$ times uniformly distributed in $[-0.5,0.5]$ both for the lower and upper bounds. In Table \ref{Yarri-table-3}, we give the mean and standard deviation of the absolute and relative difference between the lower and upper bounds with randomly sampled parameters-sets: $10$ samples of $r$ from the uniform distribution on interval $[0.01,0.08]$, $\rho$ on $[-1,1]$, $\kappa$ on $[1,10]$, $\theta$ on $[0.01,1]$, $\xi$ on $[0.1,1]$, $A$ on $[0.1,1.5]$, $x_0=1$, $v_0=0.5$, and $T=1$. It is clear that the gap between the tight lower and upper bounds is very small, especially when the dual control $c$ is used. This shows that the algorithm is reliable and accurate.

\begin{table}[!htbp]
\caption{Mean and std of the absolute and relative difference between the lower and upper bounds for Yarri  utility (Example~\ref{exam:Yarri-robust-test}) with many randomly sampled parameters-sets.} 
\centering 
\begin{tabular}{l c c c c r}\label{Yarri-table-3}\\
\hline
Num $c$	&mean diff	            &std diff	            &mean rel-diff (\%)	    &std rel-diff (\%)	    &mean time\\
1	    &$6.9575\text{e}$$-3$	&$5.0882\text{e}$$-3$	&$5.2082\text{e}$$-1$	&$3.4214\text{e}$$-1$	&$1.22\text{e}$$+3$\\
5	    &$6.3116\text{e}$$-3$	&$4.7772\text{e}$$-3$	&$4.7119\text{e}$$-1$	&$3.2324\text{e}$$-1$	&$1.73\text{e}$$+3$\\
10	    &$6.0162\text{e}$$-3$	&$4.7874\text{e}$$-3$	&$4.4703\text{e}$$-1$	&$3.2578\text{e}$$-1$	&$2.37\text{e}$$+3$\\
20	    &$5.8369\text{e}$$-3$	&$4.5691\text{e}$$-3$	&$4.3418\text{e}$$-1$	&$3.1094\text{e}$$-1$	&$3.65\text{e}$$+3$\\
\hline
\end{tabular}
\end{table}
\end{exam}


\section{Conclusions}\label{conclusions}

In this paper we use the weak duality relation to construct the lower and upper bounds on the primal value function for  utility maximization under the Heston stochastic volatility model with general utilities. We propose a dual control Monte Carlo method to compute the bounds and suggest some simple forms of the dual control $\gamma_{t}$ which makes the bounds tighter and computation easier. In particular, if $\gamma$ is taken as  $\gamma_{t}=c(t)\sqrt{v_{t}}$ with $c$ being a piecewise constant function, the closed form upper bound can be obtained for a broad class of utilities (including power and non-HARA utilties), and the Fourier-Cosine formula can be used for the Yarri utility. The gap between the lower and upper bounds can be reduced if the number of sampling or the number of time pieces increases. Numerical examples show that the tight bounds can be derived with little computational cost.

\section*{Appendix A}

In the paper we need to solve a number of times the following system of equations:
$$   C'(t)= d_1 D(t) + d_2, \quad \underline t\leq t\leq \bar t $$
and
$$
D'(t)=a D^2(t) + b D(t) + \eta, \quad \underline t\leq t\leq \bar t,$$
with the terminal conditions $C(\bar t)=f_1$ and $D(\bar t)=f_2$, where all coefficients are constants.
Assume $b^2-4a\eta>0$ and $\frac{m_1}{m_2} \notin [e^{-k_1(\bar t- \underline t)},1]$,
where
\begin{equation*} \label{notations-power-result-2}
 k_1=\sqrt{b^2-4a\eta},\quad
m_1=\frac{-b-k_1}{2a}, \quad m_2=\frac{-b+k_1}{2a}.
\end{equation*}
 The assumption $b^2-4a\eta>0$ ensures $m_1$ and $m_2$ are  distinct real numbers.
We can first find $D$ by writing the equation   as
\begin{equation*}\label{ODE-Dt-1-new-form}
\frac{1}{a}\left(\frac{1}{D-m_1}-\frac{1}{D-m_2}\right)\frac{dD}{m_1-m_2}=dt.
\end{equation*}
Using the terminal condition and denote $k_2 = \frac{f_2-m_1}{f_2-m_2}$ we obtain the solution of (\ref{ODE-Dt-1-new-form}) as
\begin{equation}\label{Dt-2}
D(t) = \frac{m_1 - m_2}{1 - k_2 \exp[k_1(\bar t-t)]} + m_2,
\end{equation}
which leads to a closed-form formula for $D(t)$ on interval $[\underline t, \bar t]$.
As for $C(t)$, we have the following form
\begin{equation}\label{Ct-2}
C(t) = - \frac{d_1(m_1-m_2)}{k_1}\ln\left(\frac{k_2-1}{k_2-\exp[-k_1(\bar t - t)]}\right) - d_1m_2(\bar t - t) - d_2(\bar t - t) + f_1.
\end{equation}
The assumption $\frac{m_1}{m_2} \notin [e^{-k_1(\bar t - \underline t)},1]$ is to exclude the case of $\int_{\underline t}^{\bar t} D(s) ds$ being hypersingular integral.

\section*{Appendix B}
This part explains the main idea of COS method in \cite{Fang2008} and derives the formula for computing the upper bound for Yarri utility.

For a function supported on $[0,\pi]$, the cosine expansion reads
\begin{eqnarray*}
g(\theta) = \mathop{{\sum}'}\limits_{k=0}^{+\infty}\mathcal{A}_k \cdot \cos(k\theta)& \text{with} & \mathcal{A}_k =\frac{2}{\pi}\int_0^{\pi}g(\theta)\cos(k\theta)d\theta,
\end{eqnarray*}
where $\mathop{{\sum}'}$ indicates that the first term in the summation is weighted by one-half. For functions supported on any other finite interval, say $[\zeta_1,\zeta_2]\subset\mathbb{R}$, the Fourier-cosine series expansion can easily be obtained via a change of variables:
\begin{eqnarray*}
\theta:=\frac{x-\zeta_1}{\zeta_2-\zeta_1}\pi; \quad x=\frac{\zeta_2-\zeta_1}{\pi}\theta + \zeta_1.
\end{eqnarray*}
It then reads
\begin{eqnarray*}
g(x)=\mathop{{\sum}'}\limits_{k=0}^{+\infty}\mathcal{A}_k\cdot \cos\left( k\pi\frac{x-\zeta_1}{\zeta_2-\zeta_1}\right),
\end{eqnarray*}
with
\begin{eqnarray}\label{int-Ak}
\mathcal{A}_k = \frac{2}{\zeta_2-\zeta_1}\int_{\zeta_1}^{\zeta_2} g(x)\cos\left(k\pi\frac{x-\zeta_1}{\zeta_2-\zeta_1}\right)dx.
\end{eqnarray}
Suppose $[\zeta_1,\zeta_2]\subset \mathbb{R}$ is chosen such that the truncated integral approximates the infinite counterpart well, i.e.,
\begin{eqnarray}\label{CCF}
\phi_1(\omega):=\int_{\zeta_1}^{\zeta_2} e^{i\omega x}g(x)dx \approx \int_{\mathbb{R}}e^{i\omega x} g(x)dx = \phi(\omega).
\end{eqnarray}
Comparing equation \eqref{CCF} with the cosine series coefficients of $f(x)$ on $[\zeta_1,\zeta_2]$ in \eqref{int-Ak}, we find that
\begin{eqnarray*}
\mathcal{A}_k \equiv \frac{2}{\zeta_2-\zeta_1} \mathrm{Re}\left\{ \phi_1\left(\frac{k\pi}{\zeta_2-\zeta_1}\right)  \cdot \exp\left(-i\frac{k\zeta_1\pi}{\zeta_2-\zeta_1}\right)\right\},
\end{eqnarray*}
where $\mathrm{Re}\{\cdot\}$ denotes taking the real part of the argument. It then follows from \eqref{CCF} that $\mathcal{A}_k \approx \mathcal{F}_k$ with
\begin{eqnarray*}
\mathcal{F}_k \equiv \frac{2}{\zeta_2-\zeta_1} \mathrm{Re}\left\{ \phi\left(\frac{k\pi}{\zeta_2-\zeta_1}\right) \cdot \exp\left(-i\frac{k\zeta_1\pi}{\zeta_2-\zeta_1}\right)\right\}.
\end{eqnarray*}
We now replace $\mathcal{A}_k$ by $\mathcal{F}_k$ in the series expansion of $g(x)$ on $[\zeta_1,\zeta_2]$, i.e.,
\begin{eqnarray*}
g_1(x)= \mathop{{\sum}'}\limits_{k=0}^{+\infty} \mathcal{F}_k \cdot \cos\left(k\pi\frac{x-\zeta_1}{\zeta_2-\zeta_1}\right)
\end{eqnarray*}
and truncate the series summation such that
\begin{eqnarray*}
g_2(x)= \mathop{{\sum}'}\limits_{k=0}^{N-1} \mathcal{F}_k \cdot \cos\left(k\pi\frac{x-\zeta_1}{\zeta_2-\zeta_1}\right).
\end{eqnarray*}
For an option pricing problem expressed in \eqref{dual-val-func-Yaari}, we rewrite it in the following form
\begin{eqnarray}\label{Z0}
 \widetilde{\mathcal{Z}}(t,z,v)=E_{t,z,v}[L(1-e^{Z_{T}})^+] = \int_{\mathbb{R}}  \widetilde{\mathcal{Z}}(T,y) g(y|z,v)dy.
\end{eqnarray}
Since the density rapidly decays to zero as $y\rightarrow\pm\infty$ in \eqref{Z0}, we truncate the infinite integration range without loosing significant accuracy to $[\zeta_1,\zeta_2]\subset \mathbb{R}$, and  obtain approximation $ {\widetilde{\mathcal{Z}}^{(1)}}$:
\begin{eqnarray*}\label{Z1}
{\widetilde{\mathcal{Z}}^{(1)}}(t,z,v)=\int_{\zeta_1}^{\zeta_2} \widetilde{\mathcal{Z}}(T,y) g(y|z,v)dy.
\end{eqnarray*}
In the second step, since $g(y|z,v)$ is usually unknown whereas the characteristic function is, we replace the density by its cosine expansion in $y$,
\begin{eqnarray*}
g(y|z,v) = \mathop{{\sum}'}\limits_{k=0}^{+\infty} \mathcal{A}_k(z,v)\cos\left(k\pi\frac{y-\zeta_1}{\zeta_2-\zeta_1}\right)
\end{eqnarray*}
with
\begin{eqnarray*}
\mathcal{A}_k(z,v):=\frac{2}{\zeta_2-\zeta_1}\int_{\zeta_1}^{\zeta_2} g(y|z,v)\cos\left(k\pi\frac{y-\zeta_1}{\zeta_2-\zeta_1}\right)dy,
\end{eqnarray*}
which leads to
\begin{eqnarray*}
 {\widetilde{\mathcal{Z}}^{(1)}}(t,z,v) = \int_{\zeta_1}^{\zeta_2} \widetilde{\mathcal{Z}}(T,y) \mathop{{\sum}'}\limits_{k=0}^{\infty} \mathcal{A}_k(z,v)\cos\left(k\pi\frac{y-\zeta_1}{\zeta_2-\zeta_1}\right) dy.
\end{eqnarray*}
Interchanging the summation and integration, and inserting the definition
\begin{eqnarray*}
 \widetilde{\mathcal{Z}}_k: = \frac{2}{\zeta_2-\zeta_1}\int_{\zeta_1}^{\zeta_2} \widetilde{\mathcal{Z}}(T,y)\cos\left(k\pi\frac{y-\zeta_1}{\zeta_2-\zeta_1}\right) dy= \frac{2}{\zeta_{2}-\zeta_{1}}L [\psi_k(\zeta_{1},0)-\chi_k(\zeta_{1},0)],
\end{eqnarray*}
where
\begin{eqnarray*}
\psi_{k}(x_{1},x_{2}) &=&
\left\{ \begin{array}{ll}
\left[\sin\left(k\pi\frac{x_{2}-\zeta_{1}}{\zeta_{2}-\zeta_{1}}\right) -\sin\left(k\pi\frac{x_{1}-\zeta_{1}}{\zeta_{2}-\zeta_{1}}\right)\right]
\frac{\zeta_{2}-\zeta_{1}}{k\pi},& k\neq 0,\\
x_{2}-x_{1},& k=0,\\
\end{array}
\right.\\
\chi_{k}(x_{1},x_{2}) &=& \frac{1}{1+\left(\frac{k\pi}{\zeta_{2}-\zeta_{1}}\right)^{2}} \Big[ \cos\Big(k\pi \frac{x_{2}-\zeta_{1}}{\zeta_{2}-\zeta_{1}}\Big)e^{x_{2}} -\cos\Big(k\pi \frac{x_{1}-\zeta_{1}}{\zeta_{2}-\zeta_{1}}\Big) e^{x_{1}}\\
&& +\, \frac{k\pi}{\zeta_{2}-\zeta_{1}} \sin\Big(k\pi \frac{x_{2}-\zeta_{1}}{\zeta_{2}-\zeta_{1}}\Big)e^{x_{2}} -\frac{k\pi}{\zeta_{2}-\zeta_{1}} \sin
\Big(k\pi \frac{x_{1}-\zeta_{1}}{\zeta_{2}-\zeta_{1}}\Big) e^{x_{1}}\Big],
\end{eqnarray*}
we have
\begin{eqnarray*}
 {\widetilde{\mathcal{Z}}^{(1)}}(t,z,v) = \frac{1}{2}(\zeta_2-\zeta_1)\mathop{{\sum}'}\limits_{k=0}^{+\infty}\mathcal{A}_k(z,v) \widetilde{\mathcal{Z}}_k.
\end{eqnarray*}
Due to the rapid decay rate of these coefficients, we further truncate the series summation to obtain approximation ${\widetilde{\mathcal{Z}}^{(2)}}$:
\begin{eqnarray*}
{\widetilde{\mathcal{Z}}^{(2)}}(t,x,v) = \frac{1}{2}(\zeta_2-\zeta_1)\mathop{{\sum}'}\limits_{k=0}^{N-1}\mathcal{A}_k(z,v) \widetilde{\mathcal{Z}}_k.
\end{eqnarray*}
Approximating $\mathcal{A}_k(z,v)$ by $\mathcal{F}_k(z,v)$, we obtain
\begin{eqnarray*}
\widetilde{\mathcal{Z}}(t,x,v) \approx {\widetilde{\mathcal{Z}}^{(3)}}(t,x,v) =\mathop{{\sum}'}\limits_{k=0}^{N-1}\mathrm{Re}\left\{\phi\left(t,z,v; \frac{k\pi}{\zeta_2-\zeta_1}\right)e^{-ik\pi\frac{\zeta_1}{\zeta_2-\zeta_1}}\right\} \widetilde{\mathcal{Z}}_k.
\end{eqnarray*}
Define $\varphi(t,v;\omega)=e^{-i\omega z}\phi(t,z,v;\omega)$. This is the conditional characteristic function of $Z_T-Z_t= \ln Y_T-\ln Y_{t}$. Then we can obtain \eqref{Yaari-dual-val-function}.
To find $y^*$ for the \eqref{Yaari-upper-bound-2}, we need to solve the following equation
\begin{equation*}
\mathcal{Z}_y \approx \mathop{{\sum}} \limits_{k=1}^{N-1} \mathrm{Re} \left\{ \varphi\left(t,v;\frac{k\pi}{\zeta_{2}-\zeta_{1}}\right)e^{ik\pi\frac{\ln y-\zeta_{1}}{\zeta_{2}-\zeta_{1}}}\frac{ik\pi}{(\zeta_{2}-\zeta_{1})y} \right\}\widetilde{\mathcal{Z}}_k=-x.
\end{equation*}
To compute the feedback control \eqref{pi-expression}, we need the following derivative
\begin{eqnarray*}
&&\mathcal{Z}_{yy} \approx  -\mathop{{\sum}} \limits_{k=1}^{N-1} \mathrm{Re} \left\{ \varphi\left(t,v;\frac{k\pi}{\zeta_{2}-\zeta_{1}}\right)e^{ik\pi\frac{\ln y-\zeta_{1}}{\zeta_{2}-\zeta_{1}}}\frac{1}{y^2}\left[ \frac{k^2\pi^2}{(\zeta_{2}-\zeta_{1})^2} + \frac{ik\pi}{\zeta_{2}-\zeta_{1}}\right] \right\}\widetilde{\mathcal{Z}}_k,\\
&&\mathcal{Z}_{yv} \approx  \mathop{{\sum}} \limits_{k=1}^{N-1} \mathrm{Re} \left\{ \varphi\left(t,v;\frac{k\pi}{\zeta_{2}-\zeta_{1}}\right)\widetilde{D} \left(t;\frac{k\pi}{\zeta_{2}-\zeta_{1}}\right)
e^{ik\pi\frac{\ln y-\zeta_{1}}{\zeta_{2}-\zeta_{1}}} \frac{ik\pi}{(\zeta_{2}-\zeta_{1})y} \right\}\widetilde{\mathcal{Z}}_k.
\end{eqnarray*}
Finally, according to \cite{Fang2008}, we can choose the boundary of integral as
\begin{equation*}
[\zeta_1,\zeta_2]:= \left[c_1-L_1\sqrt{|c_2|},~ c_1+L_1\sqrt{|c_2|}\right],
\end{equation*}
where
\begin{eqnarray*}
c_n = \frac{1}{i^n}\frac{\partial^n \ln(\varphi(t,v;\omega))}{\partial \omega^n}|_{\omega=0},
\end{eqnarray*}
and $L_1$ is a constant chosen large enough to guarantee $\zeta_1<0<\zeta_2$. Cumulant $c_2$ may become negative for sets of Heston parameters that do not satisfy the Feller condition, i.e., $2\kappa\theta \geq \xi^2$. We therefore use the absolute value of $c_2$.

\end{document}